\numberwithin{equation}{section}
\newcommand{\Rmnum}[1]{\expandafter\@slowromancap\romannumeral #1@}
\newtheorem{theorem}{Theorem}[section]
\newtheorem{lemma}[theorem]{Lemma}
\newtheorem{corollary}[theorem]{Corollary}
\newtheorem{proposition}[theorem]{Proposition}
\theoremstyle{plain}
\newtheorem{remark}[theorem]{Remark}
\newtheorem{example}[theorem]{Example}
\newtheorem{definition}[theorem]{Definition}
\newtheorem{claim}[theorem]{Claim}
\newenvironment{proof}{\noindent{\textbf{\emph{Proof.}}}}
\begin{document}

\title{Dynamical quantum state tomography with time-dependent channels}

\author[1]{{\small{Meng Cao}} \thanks{E-mail address: mengcao@bimsa.cn}}
\author[1]{{\small{Yu Wang}} \thanks{E-mail address: wangyu@bimsa.cn}}
\affil[1]{\footnotesize{Yanqi Lake Beijing Institute of Mathematical Sciences and Applications, Beijing, 101408, China} }
\renewcommand*{\Affilfont}{\small\it}

\date{}
\maketitle

\vspace{-30pt}

{\linespread{1.4}{

\begin{abstract}
In this paper, we establish a dynamical quantum state tomography framework. Under this framework, it is feasible to obtain complete knowledge of any unknown state of a $d$-level system via only an arbitrary operator of certain types of IC-POVMs in dimension $d$.
We show that under the time-dependent average channel, we can acquire a collection of projective operators that is informationally complete (IC) and thus obtain the corresponding IC-POVMs. We show that under certain condition, it is possible to obtain infinite families of projective operators that are IC, and obtain infinite families of corresponding IC-POVMs; otherwise, the Zauner's conjecture is incorrect.
We also show how to simulate a SIC-POVM on any unknown quantum state by using the time-dependent average channel.
\end{abstract}

\small{\noindent {\bfseries Keywords:} random unitary dynamics (RUD); quantum state tomography; time-dependent average channel}

\vspace{6pt}
\noindent {\small{{\bfseries{Mathematics Subject Classification (2010):}} 81P15; \ \ 81P16;  \ \ 81P45; \ \ 81P50}}}

%Compared with Theorem 2 or the technique in \cite[Sec. 8]{Galindo2015New}, Theorem 3 may yield some stabilizer codes of large minimum distance.

\section{Introduction}

Let $\mathbb{C}^{d}$ denote the $d$-dimensional vector space over the complex field $\mathbb{C}$.
Let $M_{d}(\mathbb{C})$ denote the set consisting of all $d\times d$ matrices over $\mathbb{C}$.
An operator $A$ acting on $\mathbb{C}^{d}$ is termed a \textbf{positive operator} if it meets the following conditions:

(1) $A$ is a Hermitian operator, i.e., $A=A^{\ast}$, where the transposed conjugate operator $A^{\ast}:=\overline{A^{T}}$ is called the adjoint operator of $A$;

(2) $\langle v|A|v\rangle\geq 0$ for any nonzero vector $|v\rangle\in \mathbb{C}^{d}$.

If $A$ is a positive operator, we denote it as $A\geq 0$. A positive operator-valued measure (POVM) is a general measure on a quantum system.

\begin{definition}\label{definition1.1}
\rm{(\!\!\cite{Davies1976Quantum,Busch1997Operational,Peres2006Quantum,Nielsen2000Quantum})}
Let $X$ be a set. A collection of operators $\{E_{i}\}_{i\in X}$ on $\mathbb{C}^{d}$ is called a \textbf{positive operator-valued measure (POVM)}
if it satisfies
the following two conditions:

(i) $E_{i}\geq 0$ for each $i$;

(ii) $\sum_{i\in X}E_{i}=I_{d}$.
\end{definition}

We call $\rho\in M_{d}(\mathbb{C})$ a $d$-dimensional quantum state (or, a qudit) if $\rho\geq 0$ with $\mathrm{tr}(\rho)=1$.
Consider a measurement described by a POVM $\{E_{i}\}_{i\in X}$ performed on $\rho$. The probability of obtaining outcome $i\in X$ associated with $E_{i}$ follows the Born rule \cite{Born1955Statistical}:
$p(i)=\mathrm{tr}(E_i\rho)$. Note that these probabilities $\{p(i)\}_{i\in X}$ must sum up to one, i.e., $\sum_{i\in X}p(i)=1$, which is equivalent to $\sum_{i\in X}E_{i}=I_{d}$ since $\mathrm{tr}(\rho)=1$.
If the unknown state $\rho$ can be uniquely determined from these probabilities $\{p(i)\}_{i\in X}$, we refer to the POVM $\{E_{i}\}_{i\in X}$ as an
\textbf{informationally complete POVM (IC-POVM)}.

An IC-POVM possesses the following property.

\begin{proposition}\label{proposition1.2}
\rm{(\!\!\cite[Theorem 2.4]{Ohno2014Necessary})}
\emph{Let $\{E_{i}\}_{i\in X}$ be a POVM on $\mathbb{C}^{d}$.
Then $\{E_{i}\}_{i\in X}$ is an IC-POVM if and only if $\mathrm{span}\{E_{i}\}_{i\in X}=M_{d}(\mathbb{C})$.
}
\end{proposition}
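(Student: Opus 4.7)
The plan is to translate informational completeness into a linear-algebraic condition via the Hilbert-Schmidt inner product $\langle A,B\rangle_{HS}=\mathrm{tr}(A^{\ast}B)$ on $M_{d}(\mathbb{C})$ and prove the two implications separately.

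For the direction ``$\mathrm{span}\{E_{i}\}_{i\in X}=M_{d}(\mathbb{C})$ implies IC'', I would take two states $\rho_{1},\rho_{2}\in M_{d}(\mathbb{C})$ producing identical measurement statistics, so that $\mathrm{tr}(E_{i}(\rho_{1}-\rho_{2}))=0$ for every $i\in X$. Setting $D=\rho_{1}-\rho_{2}$ and extending by linearity, $\mathrm{tr}(AD)=0$ for every $A\in\mathrm{span}\{E_{i}\}_{i\in X}=M_{d}(\mathbb{C})$. Since $D$ is Hermitian, choosing $A=D$ gives $\|D\|_{HS}^{2}=\mathrm{tr}(D^{\ast}D)=0$, whence $\rho_{1}=\rho_{2}$, establishing informational completeness.

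For the reverse direction ``IC implies $\mathrm{span}\{E_{i}\}_{i\in X}=M_{d}(\mathbb{C})$'', I would argue by contrapositive. If the span is a proper subspace of $M_{d}(\mathbb{C})$, then Hilbert-Schmidt duality supplies a nonzero $H\in M_{d}(\mathbb{C})$ with $\mathrm{tr}(E_{i}H)=0$ for all $i\in X$. Because each $E_{i}$ is Hermitian, conjugating the equation shows that $H^{\ast}$ also lies in the orthogonal complement, so the Hermitian part $\tfrac{1}{2}(H+H^{\ast})$ and the Hermitian operator $\tfrac{1}{2i}(H-H^{\ast})$ do as well; at least one of these is nonzero, so I may replace $H$ by a nonzero \emph{Hermitian} witness. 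The POVM normalization $\sum_{i\in X}E_{i}=I_{d}$ then forces $\mathrm{tr}(H)=\sum_{i\in X}\mathrm{tr}(E_{i}H)=0$, so $H$ is Hermitian and traceless.

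Finally, I would construct two distinct valid states yielding identical probabilities as $\rho_{\pm}=\tfrac{1}{d}I_{d}\pm\varepsilon H$. For $\varepsilon>0$ small enough (for instance $\varepsilon<\tfrac{1}{d\,\|H\|_{op}}$), both operators are positive semidefinite since $\tfrac{1}{d}I_{d}$ has smallest eigenvalue $\tfrac{1}{d}>0$; both have trace $1$ by the traceless property of $H$; and by construction $\mathrm{tr}(E_{i}\rho_{+})=\mathrm{tr}(E_{i}\rho_{-})$ for every $i\in X$, so the POVM is not IC. The main subtlety, and the step I expect to require the most care, is the Hermitian reduction of the witness $H$ together with the positivity bookkeeping on the perturbation; both are standard but indispensable for making the two states genuinely lie in the state space.
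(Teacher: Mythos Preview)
Your argument is correct and is the standard Hilbert--Schmidt duality proof of this characterization. Note, however, that the paper does not supply its own proof of Proposition~\ref{proposition1.2}: the statement is quoted from \cite[Theorem~2.4]{Ohno2014Necessary} and used as a black box, so there is no in-paper argument to compare against.
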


Occasionally, we regard the ``if'' portion of this proposition as defining an IC-POVM 
(see, e.g., \cite[Definition 1]{Czerwinski2021Quantum}).
It is evident that the cardinality of an IC-POVM on $\mathbb{C}^{d}$ is no less than $d^{2}$.
A notably attractive family of IC-POVMs on $\mathbb{C}^{d}$ are the symmetric informationally complete POVMs (SIC-POVMs), which consist precisely of $d^{2}$ elements.

\begin{definition}\label{definition1.3}
\rm{(\!\!\cite[pp.2171-2172]{Renes2004Symmetric})}
A POVM $\{E_{1},E_{2},\ldots,E_{d^{2}}\}$ on $\mathbb{C}^{d}$ is called a \textbf{symmetric informationally complete POVM (SIC-POVM)} if it satisfies
the following two coditions:

(i) $E_{i}=\frac{1}{d}|u_{i}\rangle\langle u_{i}|$, where $|u_{i}\rangle$ is a normalized vector in $\mathbb{C}^{d}$ for each $i=1,2,\ldots,d^{2}$;

(ii) $|\langle u_{i}|u_{j}\rangle|^{2}=d^{2}\mathrm{tr}(E_{i}E_{j})=\frac{1}{d+1}$ for all $1\leq i\neq j\leq d^{2}$.
\end{definition}

The SIC-POVM on $\mathbb{C}^{d}$ can be understood as $d^{2}$ pairwise equiangular complex lines, representing $d^{2}$ one-dimensional subspaces $\mathbb{C}|u_{1}\rangle,\mathbb{C}|u_{2}\rangle,\ldots,\mathbb{C}|u_{d^{2}}\rangle$ in $\mathbb{C}^{d}$, as initially explored by Lemmens and Seidal in \cite{Lemmens1973Equiangular}. 
Subsequently, research on SIC-POVMs and their variations has experienced rapid advancement (see, for instance, \cite{Appleby2015Group,Appleby2011The,Delsarte1991Bounds,Hoggar1998Line,Konig1994Norms,Konig1999Cubature,Strohmer2003Grassmannian,
Zauner1999Quantum,Kopp2021SIC-POVMs,Zhu2010SIC,Zhu2018Universally,Tavakoli2020Compounds,Petz2014Conditional,Gour2014Construction,Geng2021What}). 
SIC-POVMs are intricately linked to various fields such as quantum state tomography \cite{Czerwinski2021Quantum,Caves2002Unknown,Scott2006Tight},
quantum cryptography \cite{Fuchs2003Squeezing,Fuchs2004On}, design theory \cite{Renes2004Symmetric,Zauner2011Quantum,Klappenecker2005Mutually},
and frame theory \cite{Fickus2021Mutually,Fickus2018Tremain,Cahill2018Constructions,Magsino2019Biangular}.

According to Zauner's conjecture outlined in his 1999 PhD thesis \cite{Zauner1999Quantum}, SIC-POVMs are posited to exist across all finite dimensions $d\geq 2$.
Analytical constructions of SIC-POVMs on $\mathbb{C}^{d}$ have been provided for dimensions $d=2$-$24$, $28$, $30$, $31$, $35$, $37$, $39$, $43$, $48$, $124$, and $323$
(see \cite{Zauner1999Quantum,Appleby2005Symmetric,Appleby2014Systems,Appleby2012The,Appleby2018Constructing,Grassl2009OnSIC,
Grassl2005Tomography,Grassl2008Computing,Scott2010Symmetric,Grassl2017FibonacciLucas}).
Furthermore, numerical solutions of SIC-POVMs obtained through computational methods have been found for all dimensions up to $d=151$,
as well as several other dimensions up to $d=844$
(see \cite{Renes2004Symmetric,Scott2010Symmetric,Grassl2017FibonacciLucas,Fuchs2017The,Scott2017SICs}).
Although it is widely speculated that SIC-POVMs exist in every finite dimension, a formal proof of this assertion remains elusive.
Moreover, no infinite families of SIC-POVMs have been previously constructed, and it is uncertain whether SIC-POVMs on $\mathbb{C}^{d}$
exist for infinitely many dimensions $d$.

As we know, completely positive and trace-preserving (CPTP) maps serve as a means to describe transformations within quantum systems.
To record changes in the quantum system over time, it is convenient to use dynamical maps that are time-dependent CPTP maps.
As a kind of dynamical maps, the random unitary dynamics (RUD) is used to characterize evolution of states of quantum systems.
Let $\rho(0)$ represent an arbitrary unknown state of a $d$-level quantum system.
In quantum theory, if we can reconstruct the unknown state $\rho(0)$ through a set of IC measurements on identical copies of
$\rho(0)$, then we call this process quantum state tomography \cite{Paris2004Quantum}.
To search optimal approaches of quantum state tomography, Czerwinski \cite{Czerwinski2021Quantum} recently considered the
following qudit dynamics whose evolution is described by RUD:
\begin{equation*}
\rho(t)=\sum_{i\in X}\mu_{i}(t)U_{i}\rho(0)U_{i}^{\dag},
\end{equation*}
where the unknown state $\rho(0)$ is regarded as the initial state of the quantum system, $\{U_{i}\}_{i\in X}$ is a collection of unitary matrices,
and $\{\mu_{i}(t)\}_{i\in X}$ is a time-continuous probability distribution, i.e.,
$\sum_{i\in X}\mu_{i}(t)=1$ with $\mu_{i}(t)\geq 0$ for all $i\in X$ and any $t\geq 0$.
Czerwinski showed that by starting with an incomplete set of certain IC-POVMs (resp. SIC-POVMs) on $\mathbb{C}^{2}$ (resp. $\mathbb{C}^{3}$),
it is feasible to obtain complete information for the reconstruction of qubits (resp. qutrits) via multiple measurements.
More precisely, it suffices to initially have one measurement operator for certain dynamical qubits and qutrits.
These results suggest that qubits tomography and qutrits tomography are feasible even when the measurement potential is limited.

Inspired by the work in \cite{Czerwinski2021Quantum}, in Section \ref{section2}, we establish a dynamical quantum state tomography framework.
To be specific, we show that for any unknown qudit $\rho(0)$ of the $d$-level system, by regarding it as the initial state of the following
evolution subject to a type of RUD:
$$\rho(t)=\sum_{i\in X}\mu_{i}(t)H_{i}\rho(0)H_{i}^{\dag},$$
it is feasible to extract complete knowledge of $\rho(0)$ via only an arbitrary operator (matrix) $Q_{j}$ ($j\in X$) of certain types of IC-POVMs $Q:=\{Q_{i}\}_{i\in X}$ (see Theorem \ref{theorem4.4}).
Here, the unitary operators $\{H_{i}\}_{i\in X}$ are actually a collection of quasi-Householder operators, where each operator $H_{i}$ is determined
by operators $Q_{i}$ and $Q_{j}$.
This framework provides a general approach to reconstruct the unknown state $\rho(0)$
for qudits tomography, extending the cases of qubits tomography and qutrits tomography shown in \cite{Czerwinski2021Quantum}.
Since an arbitrary operator $Q_{j}$ of the IC-POVM $Q:=\{Q_{i}\}_{i\in X}$ in our framework is sufficient to reconstruct any unknown qudit $\rho(0)$,
one needs to prepare one experimental setup and then repeat the same kind of measurement at $|X|$ distinct time instants, which is more convenient than preparing a number of distinct experimental setups.
In brief, our framework gives a potential manner to reduce the number of distinct measurement setups needed for state reconstruction.
From this point of view, our framework is very efficient for quantum state tomography.
Besides, since the quasi-Householder operators $\{H_{i}\}_{i\in X}$ appearing in the framework are not unique,
and the time-continuous probability distribution $\{\mu_{i}(t)\}_{i\in X}$ can be chosen in a relatively free manner,
we are able to acquire many different types of dynamical qudits $\rho(t)$ that are all available for reconstruction of the unknown state $\rho(0)$.
From this perspective, the framework is very flexible.

In Section \ref{section3}, we consider another question: which families of IC-POVMs can be obtained from certain dynamical process?
To solve it, we define the time-dependent average channel as follows:
\begin{equation*}
\Psi_t(\rho)= \sum_{i=0}^{d^{2}-1} \mu _{i}(t) M_{i}\rho_0 M_{i}^{\dag},
\end{equation*}
where $\rho_0$ is an unknown quantum state,
$\mu _{0}(t)=\frac{1-\lambda_{0}(t)+d^{2}\sum_{i=0}^{d^{2}-1}\lambda_{i}(t)}{d^{4}}$,
$\mu_{i}(t)=\frac{1-\lambda_{0}(t)+d^{2}(1-\lambda_{i}(t))}{d^{4}}$ for $i=1,\ldots,d^{2}-1$, and
$M_{\alpha}:=M_{jk}=X^jZ^k$ are the Weyl-Heisenberg bases with $j,k\in \{0,\cdots,d-1\}$, $\alpha=jd+k$,
$X=\sum_{j=0}^{d-1}|j+1\rangle\langle j|$ and $Z=\sum_{j=0}^{d-1}\omega^j|j\rangle\langle j|$.
We show that under the time-dependent average channel $\Psi_t(\rho)$, we can acquire a collection of projective operators that is informationally complete (IC) and thus obtain the corresponding IC-POVM (see Theorem \ref{theorem5.1}). Moreover, we show that under certain condition, it is possible to obtain infinite families of projective operators that are IC, and obtain infinite families of corresponding IC-POVMs (see Theorem \ref{theorem5.4} and Corollary \ref{corollary5.5}); otherwise, the Zauner's conjecture is incorrect.

We show that for any unknown quantum state $\rho_{0}$ which is evolved with the time-dependent average channel $\Psi_t(\rho)$, only one projective measurement $\{|\phi\rangle\langle\phi|,I-|\phi\rangle\langle\phi|\}$ in $d^2$ different time instants could be enough to reconstruct $\rho_{0}$ (see Eq. (\ref{eq5.7})), instead of the common $d^2$ different measurements. As an application, we show that we can simulate a SIC-POVM on any unknown quantum state by using the time-dependent average channel $\Psi_t(\rho)$ (see Claim \ref{claim5.7}).

This paper is organized as follows. In Section \ref{section2}, we establish an explicit framework on how to reconstruct an unknown qudit via only an arbitrary operator of certain types of IC-POVMs (see Theorem \ref{theorem4.4}).
In Section \ref{section3}, we study the IC-POVMs under the time-dependent average channel
and show how to simulate a SIC-POVM on any unknown quantum state by this channel.
Section \ref{section4} gives the concluding remarks of this paper.

\section{Qudits tomography related to IC-POVMs}\label{section2}
In quantum theory, a POVM can be used to describe the outcome statistics of a quantum measurement.
Suppose the density matrix $\rho$ is an unknown state of a quantum system.
If we can reconstruct $\rho$ through a set of measurements on identical copies of
$\rho$, then this process is called \textbf{quantum state tomography} \cite{Paris2004Quantum}.
IC-POVMs play a central role in the context of quantum state tomography.

In this section, we will establish an explicit framework for qudits tomography by a type of random unitary dynamics (RUD).
Under this framework, the problem of reconstructing an unknown qudit via all of the operators of an IC-POVM can be reduced to that of measuring a time-dependent qudit at several distinct time instants via only an arbitrary operator of certain types of IC-POVMs.
This provides a general, efficient and flexible approach to reconstruct unknown qudits in quantum state tomography.

First, let us recall the concept of random unitary dynamics (RUD) \cite{Chru2013Nonmar,Czerwinski2021Quantum}.

\begin{definition}\label{definition4.1}
\rm{(\!\!\cite[Definition 5]{Czerwinski2021Quantum})}
Let $d$ be a positive integer and let $X$ be a set. We call a dynamical map $\Omega_{t}: M_{d}(\mathbb{C})\rightarrow M_{d}(\mathbb{C})$
\textbf{random unitary dynamics (RUD)} if it has the form:
\begin{equation*}
\Omega_{t}[W]=\sum_{i\in X}\mu_{i}(t)U_{i}WU_{i}^{\dag},
\end{equation*}
where $\Omega_{t}$ ($t\geq 0$) is a completely positive and trace-preserving (CPTP) map with $\Omega_{0}=\mathbb{I}$,
$\{U_{i}\}_{i\in X}$ is a collection of unitary matrices, and $\{\mu_{i}(t)\}_{i\in X}$ is a
time-continuous probability distribution, i.e.,
\begin{equation*}
\sum_{i\in X}\mu_{i}(t)=1
\end{equation*}
with $\mu_{i}(t)\geq 0$ for all $i\in X$ and $t\geq 0$.
\end{definition}

For a dynamical qudit $\rho(t)$, the probability of obtaining outcome $k\in X$ related to the operator $E_{k}$ of an IC-POVM
$\{E_{i}\}_{i\in X}$ is given by the Born rule:
\begin{equation*}
\mathrm{Prob}(t^{(k)})=\mathrm{tr}(E_{k}\rho(t)).
\end{equation*}

Suppose that $\rho(0)$ is an arbitrary unknown state (i.e., a qudit) of a $d$-level system. Then, based on Definition \ref{definition4.1},
one can consider the following qudit dynamics whose evolution is described by RUD:
{\setlength\abovedisplayskip{0.2cm}
\setlength\belowdisplayskip{0.2cm}
\begin{equation*}
\rho(t)=\Omega_{t}[\rho(0)]=\sum_{i\in X}\mu_{i}(t)U_{i}\rho(0)U_{i}^{\dag},
\end{equation*}
where the unknown state $\rho(0)$ is regarded as an initial state of the system.}

\vspace{5pt}

To obtain the main result of this section, let us first give the following lemma.

\begin{lemma}\label{lemma4.2}
Let $\{P_{i}\}_{i\in X}$ be a collection of subnormalized projectors on $\mathbb{C}^{d}$
with $P_{i}=p_{i}|a_{i}\rangle\langle a_{i}|$, where $p_{i}>0$ and $|a_{i}\rangle\in \mathbb{C}^{d}$ is a normalized vector for each $i\in X$.
Suppose $\mathrm{span}\{P_{i}\}_{i\in X}=M_{d}(\mathbb{C})$. Denote $P=\sum_{i\in X}P_{i}$.
Then, $P$ is positive definite.
\end{lemma}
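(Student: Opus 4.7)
The plan is to first observe that $P$ is manifestly positive semi-definite, and then rule out the existence of a nonzero vector in its kernel by combining the rank-one structure of the $P_i$ with the spanning hypothesis.

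For the first part, since each $P_i = p_i|a_i\rangle\langle a_i|$ is a positive scalar times a rank-one projector, we have $P_i \geq 0$, so $P = \sum_{i\in X} P_i \geq 0$ as a finite sum of positive operators. It therefore suffices to show that $\ker P = \{0\}$.

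For the second part, suppose toward a contradiction that $|v\rangle \in \mathbb{C}^d$ is nonzero with $P|v\rangle = 0$. Then $\langle v|P|v\rangle = \sum_{i\in X} p_i |\langle a_i|v\rangle|^2 = 0$. Since $p_i > 0$ for every $i$, this forces $\langle a_i|v\rangle = 0$ for all $i \in X$, and consequently $P_i|v\rangle = p_i|a_i\rangle\langle a_i|v\rangle = 0$ for all $i \in X$. The key step now uses the spanning assumption: every $A \in M_d(\mathbb{C})$ can be written as a finite linear combination $A = \sum_{i\in X} c_i P_i$, whence $A|v\rangle = \sum_{i\in X} c_i P_i|v\rangle = 0$. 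Applying this to $A = I_d$ yields $|v\rangle = 0$, contradicting the choice of $|v\rangle$.

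I do not foresee a genuine obstacle here; the only subtlety to flag is the direction of the implication in the spanning argument, namely that $P_i|v\rangle = 0$ for all $i$ really does transfer to every element of $\mathrm{span}\{P_i\}_{i\in X}$ via linearity, so the hypothesis $\mathrm{span}\{P_i\}_{i\in X} = M_d(\mathbb{C})$ is used in an essential way. This is what separates mere positivity of $P$ from strict positive definiteness.
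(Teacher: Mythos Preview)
Your proof is correct and follows essentially the same approach as the paper: establish $P\geq 0$, assume a nonzero null vector, deduce $\langle a_i|v\rangle=0$ for all $i$, and contradict the spanning hypothesis. The only cosmetic difference is that the paper first notes $\mathrm{span}\{|a_i\rangle\}_{i\in X}=\mathbb{C}^d$ and concludes via the orthogonal complement, whereas you apply the operator span directly to $I_d$; both arrive at the same contradiction.
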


\begin{proof}
First, it is not difficult to obtain $\mathbb{C}^{d}=\mathrm{span}\{|a_{i}\rangle\}_{i\in X}$.
For any nonzero vector $|y\rangle\in \mathbb{C}^{d}$, we have that
\begin{equation}\label{eq4.1}
\langle y|P|y\rangle=\sum_{i\in X}p_{i}|\langle y|a_{i}\rangle|^{2}\geq 0.
\end{equation}

Let us prove that $\langle y|P|y\rangle>0$. Assume that there exists $|y_{0}\rangle\neq\mathbf{0}$ such that
$\langle y_{0}|P|y_{0}\rangle=0$, then Eq. (\ref{eq4.1}) implies that $\langle y_{0}|a_{i}\rangle=0$ for each $i\in X$.
Hence, $|y_{0}\rangle\in \big(\mathrm{span}\{|a_{i}\rangle\}_{i\in X}\big)^{\perp}=(\mathbb{C}^{d})^{\perp}=\{\mathbf{0}\}$,
which leads to a contradiction. Therefore, $\langle y|P|y\rangle>0$ for any $|y\rangle\neq\mathbf{0}$, equivalently, $P$ is positive definite. $\hfill\square$
\end{proof}

\vspace{6pt}

Clearly, the cardinality of the set $X$ in Lemma \ref{lemma4.2} is at least $d^{2}$.
Since the operator $P$ defined in Lemma \ref{lemma4.2} is positive definite, we can obtain the uniquely determined positive definite operator $P^{-\frac{1}{2}}$. Following this fact, we obtain the following lemma.

\begin{lemma}\label{lemma4.3}
Let $\{P_{i}\}_{i\in X}$ be a collection of subnormalized projectors on $\mathbb{C}^{d}$
with $P_{i}=p_{i}|a_{i}\rangle\langle a_{i}|$, where $p_{i}>0$ and $|a_{i}\rangle\in \mathbb{C}^{d}$ is a normalized vector for each $i\in X$.
Suppose $\mathrm{span}\{P_{i}\}_{i\in X}=M_{d}(\mathbb{C})$.
Denote $P=\sum_{i\in X}P_{i}$ and $Q_{i}=P^{-\frac{1}{2}}P_{i}P^{-\frac{1}{2}}$ for each $i\in X$.
Then, $\mathrm{span}\{Q_{i}\}_{i\in X}=M_{d}(\mathbb{C})$.
\end{lemma}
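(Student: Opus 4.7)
The plan is to reduce the claim to the statement that a linear bijection of $M_d(\mathbb{C})$ preserves spans. First, by Lemma \ref{lemma4.2}, the operator $P = \sum_{i\in X} P_i$ is positive definite, so the unique positive definite square root $P^{1/2}$ is invertible, and hence so is $P^{-1/2}$.

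Next, I would introduce the linear map $T: M_d(\mathbb{C}) \to M_d(\mathbb{C})$ defined by $T(M) = P^{-1/2} M P^{-1/2}$. This is clearly $\mathbb{C}$-linear in $M$, and its inverse is the (well-defined, linear) map $M \mapsto P^{1/2} M P^{1/2}$; hence $T$ is a bijective linear endomorphism of $M_d(\mathbb{C})$. In particular $T$ sends a spanning set of $M_d(\mathbb{C})$ to a spanning set of $M_d(\mathbb{C})$, because for any family $\{A_i\}_{i\in X}$ of operators one has the elementary identity
\begin{equation*}
\mathrm{span}\{T(A_i)\}_{i\in X} = T\bigl(\mathrm{span}\{A_i\}_{i\in X}\bigr).
\end{equation*}

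Finally, applying this with $A_i = P_i$, and using the hypothesis $\mathrm{span}\{P_i\}_{i\in X} = M_d(\mathbb{C})$ together with $T(M_d(\mathbb{C})) = M_d(\mathbb{C})$, I obtain
\begin{equation*}
\mathrm{span}\{Q_i\}_{i\in X} = \mathrm{span}\{T(P_i)\}_{i\in X} = T(M_d(\mathbb{C})) = M_d(\mathbb{C}),
\end{equation*}
as required. There is essentially no obstacle in this argument; the only substantive ingredient is the positive definiteness of $P$ (already provided by Lemma \ref{lemma4.2}), which guarantees that the conjugation-by-$P^{-1/2}$ map is a bona fide bijection rather than a merely linear map that could collapse the span.
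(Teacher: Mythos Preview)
Your proof is correct and is essentially the same argument as the paper's: both exploit that conjugation by $P^{-1/2}$ is an invertible linear map on $M_d(\mathbb{C})$, the paper by explicitly writing $A = P^{-1/2}(P^{1/2}AP^{1/2})P^{-1/2}$ and expanding $P^{1/2}AP^{1/2}$ in the $P_i$, and you by naming this map $T$ and invoking that bijections preserve spans. The only difference is presentational.
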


\begin{proof}
It suffices to prove that $M_{d}(\mathbb{C})\subseteq\mathrm{span}\{Q_{i}\}_{i\in X}$.
For any operator $A\in M_{d}(\mathbb{C})$, we write it as
\begin{equation}\label{eq4.2}
A=P^{-\frac{1}{2}}(P^{\frac{1}{2}}AP^{\frac{1}{2}})P^{-\frac{1}{2}}.
\end{equation}

{\setlength\abovedisplayskip{0.25cm}
\setlength\belowdisplayskip{0.25cm}
Since $\mathrm{span}\{P_{i}\}_{i\in X}=M_{d}(\mathbb{C})$, then there exist $\{y_{i}\}_{i\in X}\in\mathbb{C}$ such that
$P^{\frac{1}{2}}AP^{\frac{1}{2}}=\sum_{i\in X}y_{i}P_{i}$. Inserting this into Eq. (\ref{eq4.2}), we have
\begin{equation*}
A=\sum_{i\in X}y_{i}P^{-\frac{1}{2}}P_{i}P^{-\frac{1}{2}}=\sum_{i\in X}y_{i}Q_{i}\in\mathrm{span}\{Q_{i}\}_{i\in X}.
\end{equation*}
Therefore, $M_{d}(\mathbb{C})\subseteq\mathrm{span}\{Q_{i}\}_{i\in X}$,
and thus $\mathrm{span}\{Q_{i}\}_{i\in X}=M_{d}(\mathbb{C})$.} $\hfill\square$
\end{proof}

\vspace{8pt}

We can verify that the set $\{Q_{i}\}_{i\in X}$, with $Q_{i}=P^{-\frac{1}{2}}P_{i}P^{-\frac{1}{2}}$ defined in Lemma \ref{lemma4.3},
forms an IC-POVM because $\sum_{i\in X}Q_{i}$ is an identity operator and $\mathrm{span}\{Q_{i}\}_{i\in X}=M_{d}(\mathbb{C})$.

\vspace{6pt}

In the following theorem, we are ready to show that for such an IC-POVM $Q:=\{Q_{i}\}_{i\in X}$ and any unknown qudit $\rho(0)$ of a $d$-level system,
there exists a time-dependent qudit (not unique) subject to RUD whose unitary matrices are a kind of quasi-Householder matrices,
such that one can extract complete knowledge about $\rho(0)$ provided the probability of an arbitrary outcome $j\in X$ related to $Q_{j}$
is measured at $|X|$ distinct time instants.

\begin{theorem}\label{theorem4.4}
Let $X=\{1,2,\ldots,x\}$ with $x\geq d^{2}$.
Let $\{P_{i}\}_{i\in X}$ be a collection of subnormalized projectors on $\mathbb{C}^{d}$
with $P_{i}=p_{i}|a_{i}\rangle\langle a_{i}|$, where $p_{i}>0$ and $|a_{i}\rangle\in \mathbb{C}^{d}$ is a normalized vector for each $i\in X$.
Suppose $\mathrm{span}\{P_{i}\}_{i\in X}=M_{d}(\mathbb{C})$.
Denote $P=\sum_{i\in X}P_{i}$, and $Q=\{Q_{i}\}_{i\in X}$ with $Q_{i}=P^{-\frac{1}{2}}P_{i}P^{-\frac{1}{2}}$ for each $i\in X$.
Then, for any unknown qudit $\rho(0)\in M_{d}(\mathbb{C})$, there exists a time-dependent qudit
\begin{equation}\label{eq4.3}
\rho(t)=\sum_{i\in X}\mu_{i}(t)H_{i}\rho(0)H_{i}^{\dag},
\end{equation}
where $\{H_{i}\}_{i\in X}$ is a collection of quasi-Householder matrices and $\{\mu_{i}(t)\}_{i\in X}$ is a time-continuous probability distribution,
such that one can extract complete knowledge about the unknown state $\rho(0)$
provided the probability of an arbitrary outcome $j\in X$ related to $Q_{j}$ is measured at $x$ distinct time instants.
\end{theorem}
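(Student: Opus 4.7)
The plan is to exploit the rank-one structure of each $Q_i$ together with the IC property of $\{Q_i\}_{i\in X}$ established in Lemma \ref{lemma4.3}. Writing $Q_i=q_i|b_i\rangle\langle b_i|$ with $|b_i\rangle:=P^{-1/2}|a_i\rangle/\|P^{-1/2}|a_i\rangle\|$ and $q_i:=p_i\|P^{-1/2}|a_i\rangle\|^{2}>0$, the IC property says precisely that the $x$ numbers $\{\mathrm{tr}(Q_i\rho(0))\}_{i\in X}$ determine $\rho(0)$ uniquely. So the task reduces to reading off each of these numbers from measurements of the single operator $Q_j$ on the time-evolved state.

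For each $i\in X$, I would construct a quasi-Householder unitary $H_i$ satisfying $H_i|b_i\rangle=e^{i\theta_i}|b_j\rangle$ for some real phase $\theta_i$. Such an $H_i$ always exists: if $|b_i\rangle$ and $|b_j\rangle$ are parallel take $H_i=I$, otherwise choose $\theta_i$ so that $e^{-i\theta_i}\langle b_j|b_i\rangle\in\mathbb{R}$ and let $H_i=I-2|v_i\rangle\langle v_i|$ with $|v_i\rangle$ proportional to $|b_j\rangle-e^{-i\theta_i}|b_i\rangle$. A direct computation then gives
\[
H_i^{\dag}Q_jH_i=q_j|b_i\rangle\langle b_i|=\frac{q_j}{q_i}Q_i.
\]
Substituting Eq. \eqref{eq4.3} into the Born rule and using the cyclicity of the trace,
\[
\mathrm{Prob}(t^{(j)})=\mathrm{tr}(Q_j\rho(t))=\sum_{i\in X}\mu_i(t)\,\mathrm{tr}(H_i^{\dag}Q_jH_i\rho(0))=q_j\sum_{i\in X}\frac{\mu_i(t)}{q_i}\,\mathrm{tr}(Q_i\rho(0)).
\]
Evaluating this identity at $x$ distinct instants $t_1,\ldots,t_x$ yields an $x\times x$ linear system with coefficient matrix $M_{ki}=q_j\mu_i(t_k)/q_i$ relating the measured probabilities $\mathrm{Prob}(t_k^{(j)})$ to the unknowns $\mathrm{tr}(Q_i\rho(0))$; provided $M$ is invertible, inverting it recovers every $\mathrm{tr}(Q_i\rho(0))$, and the IC property then reconstructs $\rho(0)$.

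The only delicate point is to exhibit a valid time-continuous probability distribution $\{\mu_i(t)\}_{i\in X}$ and times $t_1,\ldots,t_x$ making $M$ non-singular. A workable template is $\mu_i(t)=\alpha_i+\beta_i e^{-\lambda_i t}$ with pairwise distinct $\lambda_i>0$ and coefficients $\alpha_i,\beta_i$ chosen to enforce both $\mu_i(t)\geq 0$ and $\sum_i\mu_i(t)=1$; the matrix $M$ then carries Vandermonde-like structure in the variables $e^{-\lambda_i t_k}$, so for generic $t_k$ it is non-singular. I expect this simultaneous balancing of non-negativity, normalization, and invertibility---rather than any single one of them---to be the main obstacle, since the constraints are coupled. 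Once it is settled, assembling $\{H_i\}_{i\in X}$ and $\{\mu_i(t)\}_{i\in X}$ into Eq. \eqref{eq4.3} completes the proof, and the freedom in choosing the phases $\theta_i$ and the parameters $\alpha_i,\beta_i,\lambda_i$ accounts for the non-uniqueness of the resulting dynamical qudit mentioned in the introduction.
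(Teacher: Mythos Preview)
Your approach is essentially identical to the paper's: write $Q_i=q_i|b_i\rangle\langle b_i|$, build (quasi-)Householder unitaries sending $|b_i\rangle$ to $|b_j\rangle$ up to phase so that $H_i^{\dag}Q_jH_i=(q_j/q_i)Q_i$, expand $\mathrm{tr}(Q_j\rho(t))$ as a linear combination of $\{\mathrm{tr}(Q_i\rho(0))\}_i$, and invert the resulting $x\times x$ system. The only cosmetic difference is that the paper first builds $\widehat{H_i}$ with $\widehat{H_i}|b_j\rangle=|b_i\rangle$ and then sets $H_i=\widehat{H_i}^{\dag}$, whereas you construct $H_i$ directly; the key identity and the final linear system coincide.

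There is, however, one concrete glitch in your template for the probability distribution. If you take $\mu_i(t)=\alpha_i+\beta_i e^{-\lambda_i t}$ with \emph{all} $\lambda_i>0$ pairwise distinct and impose $\sum_{i\in X}\mu_i(t)\equiv 1$, then linear independence of $\{1,e^{-\lambda_1 t},\ldots,e^{-\lambda_x t}\}$ forces every $\beta_i=0$; the $\mu_i$ become constants and the coefficient matrix $M$ has rank one. The easy repair (which is exactly what the paper uses in its worked example) is to assign exponentials to only $x-1$ of the $\mu_i$ and let the last one absorb the residual, e.g.\ $\mu_i(t)=\tfrac{1}{x}(1-e^{-\lambda_i t})$ for $i<x$ and $\mu_x(t)=\tfrac{1}{x}\bigl(1+\sum_{i<x}e^{-\lambda_i t}\bigr)$; a short Wronskian/linear-independence check then gives $\det[\mu_i(t_k)]\neq 0$ for generic instants. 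With this fix your proof goes through and matches the paper's.
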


\begin{proof}
We can write $Q_{i}$ as
\begin{equation}\label{eq4.4}
Q_{i}=p_{i}P^{-\frac{1}{2}}|a_{i}\rangle\langle a_{i}|P^{-\frac{1}{2}}, \ i\in X.
\end{equation}

Note that $P^{-\frac{1}{2}}|a_{i}\rangle$ can be easily rewritten as a multiple of certain normalized vector:
\begin{equation*}
P^{-\frac{1}{2}}|a_{i}\rangle=\lambda_{i}|b_{i}\rangle, \ i\in X,
\end{equation*}
{\setlength\abovedisplayskip{0.25cm}
\setlength\belowdisplayskip{0.25cm}
where $\lambda_{i}\in \mathbb{C}\backslash\{0\}$, and $|b_{i}\rangle\in \mathbb{C}^{d}$ is a normalized vector.
To be specific, suppose $P^{-\frac{1}{2}}|a_{i}\rangle=(p_{i,1},\ldots,p_{i,d})^{T}\in \mathbb{C}^{d}$, then
\begin{align}\label{eq4.5}
P^{-\frac{1}{2}}|a_{i}\rangle&=\frac{\sqrt{\sum_{s=1}^{d}|p_{i,s}|^{2}}}{z_{i}}\Bigg(\frac{p_{i,1}z_{i}}{\sqrt{\sum_{s=1}^{d}|p_{i,s}|^{2}}},\ldots,
\frac{p_{i,d}z_{i}}{\sqrt{\sum_{s=1}^{d}|p_{i,s}|^{2}}}\Bigg)^{T}\nonumber\\
&\triangleq \lambda_{i}|b_{i}\rangle,
\end{align}}
where $\lambda_{i}=\frac{\sqrt{\sum_{s=1}^{d}|p_{i,s}|^{2}}}{z_{i}}$
and $|b_{i}\rangle=\Big(\frac{p_{i,1}z_{i}}{\sqrt{\sum_{s=1}^{d}|p_{i,s}|^{2}}},\ldots,
\frac{p_{i,d}z_{i}}{\sqrt{\sum_{s=1}^{d}|p_{i,s}|^{2}}}\Big)^{T}$
in which $z_{i}$ is an arbitrary complex number with $|z_{i}|=1$.

Inserting Eq. (\ref{eq4.5}) into Eq. (\ref{eq4.4}), we have
\begin{equation}\label{eq4.6}
Q_{i}=p_{i}|\lambda_{i}|^{2}|b_{i}\rangle\langle b_{i}|, \ i\in X.
\end{equation}

Let us choose an arbitrary index $j$ from the set $X$. Then, for the corresponding vector $|b_{j}\rangle$ and all vectors $|b_{i}\rangle$, where
$i\in X$, we can always find a collection of quasi-Householder matrices $\{\widehat{H_{i}}\}_{i\in X}$, which are unitary matrices, such that
\begin{equation}\label{eq4.7}
\widehat{H_{i}}|b_{j}\rangle=|b_{i}\rangle, \ i\in X.
\end{equation}

This is because, first, there exists $\eta_{i}\in\mathbb{C}$ such that
\begin{equation}\label{eq4.8}
\overline{\eta_{i}}\langle b_{i}|b_{j}\rangle\in\mathbb{R}\ \ \mathrm{with}\ \ |\eta_{i}|=1
\end{equation}
for each $i\in X$, where $\overline{\eta_{i}}$ denotes the complex conjugate of $\eta_{i}$; afterward, one has the following two cases:

{\bfseries Case 1:} When $|b_{j}\rangle=\eta_{i}|b_{i}\rangle$, it is not difficult to find a normalized vector $|\widetilde{u_{i}}\rangle\in\mathbb{C}^{d}$ such that
\begin{equation}\label{eq4.9}
\langle \widetilde{u_{i}}|b_{j}\rangle=0.
\end{equation}
Then, one can verify that
\begin{equation*}
\overline{\eta_{i}}\big(I_{d}-2|\widetilde{u_{i}}\rangle\langle \widetilde{u_{i}}|\big)|b_{j}\rangle=|b_{i}\rangle.
\end{equation*}
In this case, take
\begin{equation}\label{eq4.10}
\widehat{H_{i}}=\overline{\eta_{i}}\big(I_{d}-2|\widetilde{u_{i}}\rangle\langle \widetilde{u_{i}}|\big).
\end{equation}
Here, $I_{d}-2|\widetilde{u_{i}}\rangle\langle \widetilde{u_{i}}|$ is a Householder matrix.

\vspace{6pt}

{\bfseries Case 2:} When $|b_{j}\rangle\neq\eta_{i}|b_{i}\rangle$, denote
\begin{equation}\label{eq4.11}
|\widetilde{v_{i}}\rangle=\frac{|b_{j}\rangle-\eta_{i}|b_{i}\rangle}{|||b_{j}\rangle-\eta_{i}|b_{i}\rangle||_{2}}.
\end{equation}
Then, one can verify that
\begin{equation*}
\overline{\eta_{i}}\big(I_{d}-2|\widetilde{v_{i}}\rangle\langle \widetilde{v_{i}}|\big)|b_{j}\rangle=|b_{i}\rangle.
\end{equation*}
In this case, take
\begin{equation}\label{eq4.12}
\widehat{H_{i}}=\overline{\eta_{i}}\big(I_{d}-2|\widetilde{v_{i}}\rangle\langle \widetilde{v_{i}}|\big).
\end{equation}
Here, $I_{d}-2|\widetilde{v_{i}}\rangle\langle \widetilde{v_{i}}|$ is a Householder matrix.

\vspace{6pt}

From {\bfseries Cases 1} and {\bfseries 2}, we can always find the quasi-Householder matrix $\widehat{H_{i}}$ such that Eq. (\ref{eq4.7}) holds.
Substituting Eq. (\ref{eq4.7}) into Eq. (\ref{eq4.6}) we obtain
\begin{align}\label{eq4.13}
Q_{i}&=p_{i}|\lambda_{i}|^{2}\widehat{H_{i}}|b_{j}\rangle\langle b_{j}|\widehat{H_{i}}^{\dag}\nonumber\\
     &=\frac{p_{i}|\lambda_{i}|^{2}}{p_{j}|\lambda_{j}|^{2}}\widehat{H_{i}}(p_{j}|\lambda_{j}|^{2}|b_{j}\rangle\langle b_{j}|)\widehat{H_{i}}^{\dag}\nonumber\\
     &=\frac{p_{i}|\lambda_{i}|^{2}}{p_{j}|\lambda_{j}|^{2}}\widehat{H_{i}}Q_{j}\widehat{H_{i}}^{\dag}.
\end{align}

Let $t_{1},t_{2},\ldots,t_{x}$ ($x\geq d^{2}$) be $x$ distinct time instants. Then, one can select a time-continuous probability distribution
$\{\mu_{i}(t)\}_{i\in X}$ such that $\mathrm{det}([\mu_{j}(t_{i})]_{i,j=1}^{x})\neq 0$.

Define $H_{i}=\widehat{H_{i}}^{\dag}$, then $H_{i}$ is a quasi-Householder matrix as well.
Now, consider the following dynamical qudit whose evolution is described by RUD with quasi-Householder matrices:
\begin{equation*}
\rho(t)=\sum_{i\in X}\mu_{i}(t)H_{i}\rho(0)H_{i}^{\dag}.
\end{equation*}
Then, the probability of obtaining outcome $j$ related to $Q_{j}$ is given by Born rule:
\begin{align}\label{eq4.14}
\mathrm{Prob}(t^{(j)})&=\mathrm{tr}(Q_{j}\rho(t))\nonumber\\
&=\mathrm{tr}\Big(Q_{j}\big(\sum_{i\in X}\mu_{i}(t)H_{i}\rho(0)H_{i}^{\dag}\big)\Big)\nonumber\\
&=\sum_{i\in X}\mu_{i}(t)\mathrm{tr}(H_{i}^{\dag}Q_{j}H_{i}\rho(0)).
\end{align}

{\setlength\abovedisplayskip{0.2cm}
\setlength\belowdisplayskip{0.2cm}
For convenience, denote $\widetilde{p_{i}}=\frac{p_{i}|\lambda_{i}|^{2}}{p_{j}|\lambda_{j}|^{2}}$, then Eq. (\ref{eq4.13}) can be rewritten as
\begin{equation*}
Q_{i}=\widetilde{p_{i}}\widehat{H_{i}}Q_{j}\widehat{H_{i}}^{\dag}=\widetilde{p_{i}}H_{i}^{\dag}Q_{j}H_{i}, \ i\in X.
\end{equation*}

Taking $t=t_{1},t_{2},\ldots,t_{x}$ in Eq. (\ref{eq4.14}), we have
\begin{align*}
\left[
\begin{array}{c}
\mathrm{Prob}\big(t_{1}^{(j)}\big)\\
\mathrm{Prob}\big(t_{2}^{(j)}\big)\\
\vdots\\
\mathrm{Prob}\big(t_{x}^{(j)}\big)\\
\end{array}\right]&=\left[
\begin{array}{cccc}
\frac{\mu_{1}(t_{1})}{\widetilde{p_{1}}}&\frac{\mu_{2}(t_{1})}{\widetilde{p_{2}}}&\cdots&\frac{\mu_{x}(t_{1})}{\widetilde{p_{x}}}\\
\frac{\mu_{1}(t_{2})}{\widetilde{p_{1}}}&\frac{\mu_{2}(t_{2})}{\widetilde{p_{2}}}&\cdots&\frac{\mu_{x}(t_{2})}{\widetilde{p_{x}}}\\
\vdots&\vdots&\ddots&\vdots\\
\frac{\mu_{1}(t_{x})}{\widetilde{p_{1}}}&\frac{\mu_{2}(t_{x})}{\widetilde{p_{2}}}&\cdots&\frac{\mu_{x}(t_{x})}{\widetilde{p_{x}}}\\
\end{array}\right]\left[
\begin{array}{c}
\mathrm{tr}(\widetilde{p_{1}}H_{1}^{\dag}Q_{j}H_{1}\rho(0))\\
\mathrm{tr}(\widetilde{p_{2}}H_{2}^{\dag}Q_{j}H_{2}\rho(0))\\
\vdots\\
\mathrm{tr}(\widetilde{p_{x}}H_{x}^{\dag}Q_{j}H_{x}\rho(0))\\
\end{array}\right]\\
\\[-2.5ex]
&\triangleq K\left[
\begin{array}{c}
\mathrm{tr}(Q_{1}\rho(0))\\
\mathrm{tr}(Q_{2}\rho(0))\\
\vdots\\
\mathrm{tr}(Q_{x}\rho(0))\\
\end{array}\right],
\end{align*}
where $K:=\big[\frac{\mu_{j}(t_{i})}{\widetilde{p_{j}}}\big]_{i,j=1}^{x}$. Clearly, the matrix $K$ is invertible since
\begin{equation}\label{eq4.15}
\mathrm{det}(K)=\Big(\prod_{i=1}^{x}\widetilde{p_{i}}^{-1}\Big)\mathrm{det}([\mu_{j}(t_{i})]_{i,j=1}^{x})\neq 0.
\end{equation}

Therefore, once the probabilities $\big\{\mathrm{Prob}\big(t_{i}^{(j)}\big)\big\}_{i\in X}$ of an arbitrary outcome $j\in X$
are experimentally known, we can directly obtain the values of the set $\{\mathrm{tr}(Q_{i}\rho(0))\}_{i\in X}$, which implies that
the complete knowledge about the unknown state $\rho(0)$ can be extracted since the set $\{Q_{i}\}_{i\in X}$ forms an IC-POVM.} $\hfill\square$
\end{proof}

\begin{remark}\label{remark4.5}
The framework in Theorem \ref{theorem4.4} has the following advantages and characteristics:
\begin{itemize}
\item It provides a general approach that can be performed for qudits tomography,
extending the cases of qubits tomography and qutrits tomography studied in \cite{Czerwinski2021Quantum}.

\item It gives a potential manner to reduce the number of distinct measurement setups needed for state reconstruction.
To be specific, for qudits tomography by IC-POVMs on $\mathbb{C}^{d}$, the standard approach usually needs to realize at least $d^{2}$ distinct measurements associated with each operator of an IC-POVM. By contrast, our framework shows that an arbitrary operator $Q_{j}$ of the IC-POVM $Q:=\{Q_{i}\}_{i\in X}$ in Theorem \ref{theorem4.4} is sufficient to reconstruct any unknown state $\rho(0)$ in qudits tomography.
This implies that one needs to prepare one experimental setup and then repeat the same kind of measurement at $|X|$ distinct time instants,
which is more convenient than preparing a number of distinct experimental setups. From this point of view, the framework is very efficient for
qudits tomography.

\item Note that the vector $|b_{i}\rangle$ in Eq. (\ref{eq4.5}), the complex number $\eta_{i}$ in Eq. (\ref{eq4.8}), the vector
$|\widetilde{u_{i}}\rangle$ in Eq. (\ref{eq4.9}) and the vector $|\widetilde{v_{i}}\rangle$ in Eq. (\ref{eq4.11}) are not unique.
Consequently, the corresponding quasi-Householder matrices $\widehat{H_{i}}$ in Eqs. (\ref{eq4.10}) and (\ref{eq4.12}) are not unique,
and so are their conjugate transpose matrices $H_{i}$.
Besides, the time-continuous probability distribution $\{\mu_{i}(t)\}_{i\in X}$ can be chosen in a relatively free manner provided
$\mathrm{det}([\mu_{j}(t_{i})]_{i,j=1}^{x})\neq 0$. All of these facts allow us to obtain many different types of dynamical qudits $\rho(t)$
(see Eq. (\ref{eq4.3})), which are all available for the reconstruction of the unknown state $\rho(0)$.
From this perspective, the framework is very flexible.
\end{itemize}
\end{remark}

To illustrate the framework in Theorem \ref{theorem4.4}, let us give the following example.

\begin{example}\label{example4.8}
Consider $j=7$ in Theorem \ref{theorem4.4}. Let $\mathbb{F}_{3}=\{a_{1},a_{2},a_{3}\}$,
where $a_{1}=0$, $a_{2}=1$ and $a_{3}=-1$.
Put $f(x)=x^{2}-x+1\in\mathbb{F}_{3}[x]$, then $f(a_{1})=1$, $f(a_{2})=1$ and $f(a_{3})=0$.
Let $\chi$ be a nontrivial additive character with $\chi(x)=e^{\frac{2\pi i}{3}x}$ for each $x\in \mathbb{F}_{3}$.
Then, $\chi(a_{1})=1$, $\chi(a_{2})=e^{\frac{2\pi i}{3}}$ and $\chi(a_{3})=e^{-\frac{2\pi i}{3}}$. Define
\begin{equation*}
|v_{1}\rangle:=\frac{1}{\sqrt{3}}\big(\chi(a_{i})\big)_{i=1,2,3}^{T}
=\frac{1}{\sqrt{3}}\big(1,e^{\frac{2\pi i}{3}},e^{-\frac{2\pi i}{3}}\big)^{T},
\end{equation*}
\begin{equation*}
|v_{2}\rangle:=\frac{1}{\sqrt{3}}\big(\chi(-a_{i})\big)_{i=1,2,3}^{T}
=\frac{1}{\sqrt{3}}\big(1,e^{-\frac{2\pi i}{3}},e^{\frac{2\pi i}{3}}\big)^{T},
\end{equation*}
\begin{equation*}
|v_{3}\rangle:=\frac{1}{\sqrt{3}}\big(\chi(f(a_{i})+a_{i})\big)_{i=1,2,3}^{T}
=\frac{1}{\sqrt{3}}\big(e^{\frac{2\pi i}{3}},e^{-\frac{2\pi i}{3}},e^{-\frac{2\pi i}{3}}\big)^{T},
\end{equation*}
\begin{equation*}
|v_{4}\rangle:=\frac{1}{\sqrt{3}}\big(\chi(f(a_{i})-a_{i})\big)_{i=1,2,3}^{T}
=\frac{1}{\sqrt{3}}\big(e^{\frac{2\pi i}{3}},1,e^{\frac{2\pi i}{3}}\big)^{T},
\end{equation*}
\begin{equation*}
|v_{5}\rangle:=\frac{1}{\sqrt{3}}\big(\chi(-f(a_{i})+a_{i})\big)_{i=1,2,3}^{T}
=\frac{1}{\sqrt{3}}\big(e^{-\frac{2\pi i}{3}},1,e^{-\frac{2\pi i}{3}}\big)^{T},
\end{equation*}
\begin{equation*}
|v_{6}\rangle:=\frac{1}{\sqrt{3}}\big(\chi(-f(a_{i})-a_{i})\big)_{i=1,2,3}^{T}
=\frac{1}{\sqrt{3}}\big(e^{-\frac{2\pi i}{3}},e^{\frac{2\pi i}{3}},e^{\frac{2\pi i}{3}}\big)^{T},
\end{equation*}
\begin{equation*}
|v_{7}\rangle:=(1,0,0)^{T},\ |v_{8}\rangle:=(0,1,0)^{T},\ |v_{9}\rangle:=(0,0,1)^{T}.
\end{equation*}

Denote $E=\sum_{i=1}^{9}E_{i}$ with $E_{i}:=\frac{1}{3}|v_{i}\rangle\langle v_{i}|$ for $i=1,2,\ldots,9$.
One can verify that $E_{1},E_{2},\ldots,E_{9}$ are linearly independent as elements of $M_{3}(\mathbb{C})$,
Besides, it follows that
\begin{align}\label{eq.2.16}
E=\left[
\begin{array}{ccc}
1&0&0\\
0&1&-\frac{1}{3}\\
0&-\frac{1}{3}&1\\
\end{array}\right],
\end{align}
which is positive definite.
Hence, $E^{-1}$ exists and it is positive definite as well, which implies that we can obtain the uniquely determined positive definite matrix $E^{-\frac{1}{2}}$.
Denote $M_{i}=E^{-\frac{1}{2}}E_{i}E^{-\frac{1}{2}}$ for $i=1,2,\ldots,9$. Then
\begin{equation*}
\sum_{i=1}^{9}M_{i}=E^{-\frac{1}{2}}\Big(\sum_{i=1}^{9}E_{i}\Big)E^{-\frac{1}{2}}=I_{3}.
\end{equation*}
Since $E_{1},E_{2},\ldots,E_{9}$ are linearly independent as elements of $M_{3}(\mathbb{C})$, $M_{1},M_{2},\ldots,M_{9}$ are also linearly independent as elements of $M_{3}(\mathbb{C})$. Therefore, $\{M_{i}|i=1,2,\ldots,9\}$ forms an IC-POVM.

By Eq. (\ref{eq.2.16}), we obtain that
\begin{equation*}
E^{-1}=\left[
\begin{array}{ccc}
1&0&0\\
0&\frac{9}{8}&\frac{3}{8}\\
0&\frac{3}{8}&\frac{9}{8}\\
\end{array}\right]\ \mathrm{and} \ E^{-\frac{1}{2}}=\left[
\begin{array}{ccc}
1&0&0\\
0&\frac{\sqrt{9+6\sqrt{2}}}{4}&\frac{\sqrt{9-6\sqrt{2}}}{4}\\
0&\frac{\sqrt{9-6\sqrt{2}}}{4}&\frac{\sqrt{9+6\sqrt{2}}}{4}\\
\end{array}\right].
\end{equation*}

Denote $\beta:=-\frac{\sqrt{46}(\kappa+\sigma)}{23}$, $\gamma:=\frac{\sqrt{138}(\kappa-\sigma)}{23}$,
$A:=\frac{2\sqrt{46}}{23}(\kappa-\frac{\sigma}{2})$, $B:=\frac{2\sqrt{46}}{23}(\sigma-\frac{\kappa}{2})$ and
$C:=\frac{\sqrt{138}}{23}$, where $\kappa:=\frac{\sqrt{9+6\sqrt{2}}}{4}$ and $\sigma:=\frac{\sqrt{9-6\sqrt{2}}}{4}$.
By Eq. (\ref{eq4.5}), each $E^{-\frac{1}{2}}|v_{i}\rangle$ can be expressed as a multiple of certain normalized vector
(here, for simplicity, we take $z_{i}=1$ for all $i$ in Eq. (\ref{eq4.5})):
\begin{equation*}
E^{-\frac{1}{2}}|v_{1}\rangle=\lambda_{1}|b_{1}\rangle
\triangleq \frac{\sqrt{138}}{12}
\bigg(\frac{2\sqrt{46}}{23},\beta+\gamma i,\beta-\gamma i\bigg)^{T},
\end{equation*}
\begin{equation*}
E^{-\frac{1}{2}}|v_{2}\rangle=\lambda_{2}|b_{2}\rangle
\triangleq \frac{\sqrt{138}}{12}
\bigg(\frac{2\sqrt{46}}{23},\beta-\gamma i,\beta+\gamma i\bigg)^{T},
\end{equation*}
\begin{equation*}
E^{-\frac{1}{2}}|v_{3}\rangle=\lambda_{3}|b_{3}\rangle
\triangleq \frac{2\sqrt{3}}{3}
\bigg(-\frac{1}{4}+\frac{\sqrt{3}}{4}i,
\frac{\sqrt{46}\beta}{8}+\frac{\sqrt{138}\beta}{8}i,
\frac{\sqrt{46}\beta}{8}+\frac{\sqrt{138}\beta}{8}i\bigg)^{T},
\end{equation*}
\begin{equation*}
E^{-\frac{1}{2}}|v_{4}\rangle=\lambda_{4}|b_{4}\rangle
\triangleq \frac{\sqrt{138}}{12}
\bigg(-\frac{\sqrt{3}C}{3}+Ci,A+\sigma C i,B+\kappa C i\bigg)^{T},
\end{equation*}
\begin{equation*}
E^{-\frac{1}{2}}|v_{5}\rangle=\lambda_{5}|b_{5}\rangle
\triangleq \frac{\sqrt{138}}{12}
\bigg(-\frac{\sqrt{3}C}{3}-Ci,A-\sigma C i,B-\kappa C i\bigg)^{T},
\end{equation*}
\begin{equation*}
E^{-\frac{1}{2}}|v_{6}\rangle=\lambda_{6}|b_{6}\rangle
\triangleq \frac{2\sqrt{3}}{3}
\bigg(-\frac{1}{4}-\frac{\sqrt{3}}{4}i,
\frac{\sqrt{46}\beta}{8}-\frac{\sqrt{138}\beta}{8}i,
\frac{\sqrt{46}\beta}{8}-\frac{\sqrt{138}\beta}{8}i\bigg)^{T},
\end{equation*}
\begin{equation*}
E^{-\frac{1}{2}}|v_{7}\rangle=\lambda_{7}|b_{7}\rangle
\triangleq (1,0,0)^{T},
\end{equation*}
\begin{equation*}
E^{-\frac{1}{2}}|v_{8}\rangle=\lambda_{8}|b_{8}\rangle
\triangleq \frac{3\sqrt{2}}{4}\bigg(0,\frac{2\sqrt{2}\kappa}{3},\frac{2\sqrt{2}\sigma}{3}\bigg)^{T},
\end{equation*}
\begin{equation*}
E^{-\frac{1}{2}}|v_{9}\rangle=\lambda_{9}|b_{9}\rangle
\triangleq \frac{3\sqrt{2}}{4}\bigg(0,\frac{2\sqrt{2}\sigma}{3},\frac{2\sqrt{2}\kappa}{3}\bigg)^{T}.
\end{equation*}

For $j=7$, according to the above normalized vectors $\{|b_{i}\rangle\}_{i=1}^{9}$, we can make the choice on
the values of $\{\eta_{i}\}_{i=1}^{9}$ in Eq. (\ref{eq4.8}):
\begin{equation*}
\eta_{1}=\eta_{2}=\eta_{7}=\eta_{8}=\eta_{9}=1,
\end{equation*}
\begin{equation*}
\eta_{3}=\eta_{4}=\frac{1}{2}+\frac{\sqrt{3}}{2}i,
\end{equation*}
\begin{equation*}
\eta_{5}=\eta_{6}=\frac{1}{2}-\frac{\sqrt{3}}{2}i.
\end{equation*}

Based on these results, the corresponding quasi-Householder matrices $\{\widehat{H_{i}}\}_{i=1}^{9}$ from
Eqs. (\ref{eq4.10}) and (\ref{eq4.12}) are then given by
\begin{equation*}
\widehat{H_{1}}=\left[
\begin{array}{ccc}
0.5898&-0.3612-0.4423i&-0.3612+0.4423i\\
-0.3612+0.4423i&0.2051&0.1590+0.7788i\\
-0.3612-0.4423i&0.1590-0.7788i&0.2051\\
\end{array}\right],
\end{equation*}
\begin{equation*}
\widehat{H_{2}}=\left[
\begin{array}{ccc}
0.5898&-0.3612+0.4423i&-0.3612-0.4423i\\
-0.3612-0.4423i&0.2051&0.1590-0.7788i\\
-0.3612+0.4423i&0.1590+0.7788i&0.2051\\
\end{array}\right],
\end{equation*}
\begin{equation*}
\widehat{H_{3}}=\left[
\begin{array}{ccc}
-0.2500+0.4330i&0.6124&0.6124\\
-0.3062-0.5303i&0.3750-0.6495i&-0.1250+0.2165i\\
-0.3062-0.5303i&-0.1250+0.2165i&0.3750-0.6495i\\
\end{array}\right],
\end{equation*}
\begin{equation*}
\widehat{H_{4}}=\left[
\begin{array}{ccc}
-0.2949+0.5108i&-0.3612-0.4423i&-0.3612+0.4423i\\
0.5636+0.0916i&0.3974-0.6884i&0.1946+0.0650i\\
-0.2025+0.5339i&-0.1535-0.1360i&0.3974-0.6884i\\
\end{array}\right],
\end{equation*}
\begin{equation*}
\widehat{H_{5}}=\left[
\begin{array}{ccc}
-0.2949-0.5108i&-0.3612+0.4423i&-0.3612-0.4423i\\
0.5636-0.0916i&0.3974+0.6884i&0.1946-0.0650i\\
-0.2025-0.5339i&-0.1535+0.1360i&0.3974+0.6884i\\
\end{array}\right],
\end{equation*}
\begin{equation*}
\widehat{H_{6}}=\left[
\begin{array}{ccc}
-0.2500-0.4330i&0.6124&0.6124\\
-0.3062+0.5303i&0.3750+0.6495i&-0.1250-0.2165i\\
-0.3062+0.5303i&-0.1250-0.2165i&0.3750+0.6495i\\
\end{array}\right],
\end{equation*}
\begin{equation*}
\widehat{H_{7}}=\left[
\begin{array}{ccc}
1&0&0\\
0&-0.1111&0.5476-0.8293i\\
0&0.5476+0.8293i&0.1111\\
\end{array}\right],
\end{equation*}
\begin{equation*}
\widehat{H_{8}}=\left[
\begin{array}{ccc}
0&0.9856&0.1691\\
0.9856&0.0286&-0.1667\\
0.1691&-0.1667&0.9714\\
\end{array}\right],\ \widehat{H_{9}}=\left[
\begin{array}{ccc}
0&0.1691&0.9856\\
0.1691&0.9714&-0.1667\\
0.9856&-0.1667&0.0286\\
\end{array}\right].
\end{equation*}

Define $H_{i}=\widehat{H_{i}}^{\dag}$, with $\widehat{H_{i}}$ calculated above for $i=1,2,\ldots,9$.
Put $\mu_{i}(t)=\frac{1-\mathrm{e}^{-\theta_{i}t}}{9}$ for $i=1,2,\ldots,8$, and $\mu_{9}(t)=\frac{1+\sum_{s=1}^{8}\mathrm{e}^{-\theta_{s}t}}{9}$,
where $\{\theta_{i}\}_{i=1}^{8}$ represent eight distinct positive decoherence parameters.
By calculating the Wronskian of $\{\mu_{i}(t)\}_{i=1}^{9}$, we deduce that $\{\mu_{i}(t)\}_{i=1}^{9}$ are linearly independent.
Let $t_{1},t_{2},\ldots,t_{9}$ be nine distinct time instants. Then, we know that $\mathrm{det}([\mu_{j}(t_{i})]_{i,j=1}^{9})\neq 0$,
and thus $\mathrm{det}(K)\neq 0$ by Eq. (\ref{eq4.15}).

Now, for any unknown state $\rho(0)\in M_{3}(\mathbb{C})$, we can combine it with the above terms $\{H_{i}\}_{i=1}^{9}$ and $\{\mu_{i}(t)\}_{i=1}^{9}$
to define the following dynamical qudit:
\begin{equation*}
\rho(t)=\sum_{i=1}^{9}\mu_{i}(t)H_{i}\rho(0)H_{i}^{\dag},
\end{equation*}

At this time, once the probabilities $\big\{\mathrm{Prob}\big(t_{i}^{(7)}\big)\big\}_{i=1}^{9}$ of outcome $7$ measured at nine distinct time instants
$t_{1},t_{2},\ldots,t_{9}$ are experimentally known, the values of the set $\{\mathrm{tr}(M_{i}\rho(0))\}_{i=1}^{9}$ are determined, and thus the complete knowledge about the unknown state $\rho(0)$ is extracted.
\end{example}

\section{Noise could be useful in quantum state tomography}\label{section3}

In Section \ref{section2}, we showed a dynamical tomography method for the initial state of a evolving system.
By choosing an IC-POVM, the method works for certain corresponding dynamical process.
In this section, we will consider the opposite question: which families of IC-POVMs can be obtained from certain dynamical process?
We will show that under the time-dependent average channel $\Psi_t(\rho)$ defined in Eq. (\ref{eq5.5}), we can acquire a collection of projective operators that is informationally complete (IC) and thus obtain the corresponding IC-POVM (see Theorem \ref{theorem5.1}). Moreover, we also show that under certain condition, it is possible to acquire infinite families of projective operators that are IC, and obtain infinite families of corresponding IC-POVMs (see Theorem \ref{theorem5.4} and Corollary \ref{corollary5.5}); otherwise, the Zauner's conjecture is incorrect.

We show that for any unknown quantum state $\rho_{0}$ which is evolved with the time-dependent average channel $\Psi_t(\rho)$ defined in Eq. (\ref{eq5.5}), only one projective measurement $\{|\phi\rangle\langle\phi|,I-|\phi\rangle\langle\phi|\}$ in $d^2$ different time instants could be enough to reconstruct $\rho_{0}$ (see Eq. (\ref{eq5.7})), instead of the common $d^2$ different measurements. As an application, we show that we can simulate a SIC-POVM on any unknown quantum state by using the time-dependent average channel $\Psi_t(\rho)$ (see Claim \ref{claim5.7}).

\subsection{The time-dependent average channel}\label{subsection5.1}

Usually, noise is not what we want in quantum information processing.
Noise effects the evolution of the main system and makes it no longer unitary which can be expressed as the operator-sum form
$\epsilon(\rho)=\sum_{k} E_k \rho E_k^{\dag}$.
Perhaps one of the simplest noise models of a quantum system is the depolarizing channel \cite{Burrell2009Geometry,King2003The}.
If this channel is time-dependent acting on the initial state $\rho_0$, we could express it as
\begin{equation}\label{eq5.1}
\epsilon_0(\rho)=\lambda_{0}(t)\rho_0+\frac{1-\lambda_{0}(t)}{d} I,
\end{equation}
where $0\le \lambda_{0}(t)\le 1$.

The set of Weyl-Heisenberg bases is a possible way to generalize the Pauli bases.
Let us rewrite the depolarizing channel through this set.

For an arbitrary $d$-dimensional space, define
\begin{equation*}
	X=\sum_{j=0}^{d-1}|j+1\rangle\langle j|,~~ Z=\sum_{j=0}^{d-1}\omega^j|j\rangle\langle j|,
\end{equation*}
where $\omega=e^{2\pi i/d}$.
It is not difficult to verify that $X^r=\sum_{j=0}^{d-1}|j+r\rangle\langle j|$, $Z^r=\sum_{j=0}^{d-1}\omega^{jr}|j\rangle\langle j|$.

Now we may define the Weyl-Heisenberg bases as

\begin{equation}\label{eq5.2}
M_{\alpha}:=M_{jk}=X^jZ^k,
\end{equation}
where $j,k\in \{0,\cdots,d-1\}$ and $\alpha=jd+k$.

With some simple calculation, $X^jZ^k=\sum\omega^{tk}|t+j\rangle\langle t|$, $Z^kX^j=\omega^{jk}\sum\omega^{tk}|t+j\rangle\langle t|$. Then
\begin{equation*}
X^jZ^k=\omega^{-jk}Z^kX^j.
\end{equation*}
By \cite{Burrell2009Geometry}, we have
\begin{enumerate}
	\item $M_0=I$,
	\item 	$\mbox{tr}(M_{\alpha})=0$ for all $\alpha\ne 0$,
	\item  $\mbox{tr}(M_{\alpha}^{\dag}M_{\beta})=0$ for all $\alpha\ne \beta$.
\end{enumerate}

Then for any $d$-dimensional quantum state $\rho$,
\begin{equation*}
	\rho=\frac{1}{d} \sum_{\alpha=0}^{d^2-1}\mbox{tr}(M_{\alpha}^{\dag}\rho)M_{\alpha}
	=\frac{1}{d} \sum_{j,k=0}^{d-1}\mbox{tr}(M_{jk}^{\dag}\rho)M_{jk}.
\end{equation*}

Taking Eq. (\ref{eq5.2}) into consideration, we have

\begin{equation*}
M_{mn}\rho M_{mn}^{\dag}=\frac{1}{d}\sum_{j,k=0}^{d-1}\omega^{km+nj-2jk}\mbox{tr}(M_{jk}^{\dag}\rho)M_{jk}.
\end{equation*}

For any $(j,k)\ne (0,0)$, $\sum_{m,n=0}^{d-1}\omega^{km+nj-2jk}=0$.
For $j=k=0$, $\sum_{m,n=0}^{d-1}\omega^{km+nj-2jk}=d^2$.
Then we have
\begin{equation*}
	\sum_{m,n=0}^{d-1}M_{mn}\rho M_{mn}^{\dag}=d I.
\end{equation*}
Hence, the time-dependent depolarizing channel in Eq. (\ref{eq5.1}) is rewritten as
\begin{equation*}
	\epsilon_0(\rho)=\lambda_{0}(t)\rho_0+\frac{1-\lambda_{0}(t)}{d^2} \sum_{m,n=0}^{d-1}M_{mn}\rho_0 M_{mn}^{\dag}.
\end{equation*}

Define another type of channel as
\begin{equation*}
	\epsilon_k(\rho)=\lambda_k(t)\rho_0+(1-\lambda_k(t))M_{k}\rho_0 M_{k}^{\dag},~k=1,\cdots,d^2-1.
\end{equation*}

If we evolve the initial state $\rho_0$ with the above $d^{2}$ channels with probability $1/d^2$, then a new channel called the \textbf{time-dependent average channel} is defined as
\begin{align}\label{eq5.3}
\Psi_t(\rho)
&=\frac{\sum_{i=0}^{d^{2}-1}\epsilon_i(\rho)}{d^{2}}\nonumber\\
&=\frac{1-\lambda_{0}(t)+d^{2}\sum_{i=0}^{d^{2}-1}\lambda_{i}(t)}{d^{4}}\rho_{0}
+\sum_{i=1}^{d^{2}-1}\frac{1-\lambda_{0}(t)+d^{2}(1-\lambda_{i}(t))}{d^{4}}M_{i}\rho_{0} M_{i}^{\dag}.
\end{align}

\subsection{Which families of IC-POVMs can be obtained from the time-dependent average channel?}\label{subsection5.2}

We can select a discrete number of different time instants $\{t_1,\cdots,t_{d^2}\}$. We measure the evolved state in these different time instants with a POVM $\{E_0=|\phi\rangle\langle \phi|,I-E_0\}$. Denote the probability as $p(t)$ if the state $\Psi_t(\rho)$ is collapsed into $|\phi\rangle\langle \phi|$.
Then the probability $p(t)$ can be expressed as
\begin{equation}\label{eq5.4}
p(t)=\mbox{tr}[\Psi_t(\rho)|\phi\rangle\langle \phi|].
\end{equation}

By Eq. (\ref{eq5.3}), we can rewrite the time-dependent average channel $\Psi_t(\rho)$ as
\begin{equation}\label{eq5.5}
\Psi_t(\rho)= \sum_{i=0}^{d^{2}-1} \mu _{i}(t) M_{i}\rho_0 M_{i}^{\dag},
\end{equation}
where $\rho_0$ is an unknown quantum state,
$\mu _{0}(t)=\frac{1-\lambda_{0}(t)+d^{2}\sum_{i=0}^{d^{2}-1}\lambda_{i}(t)}{d^{4}}$ and
$\mu_{i}(t)=\frac{1-\lambda_{0}(t)+d^{2}(1-\lambda_{i}(t))}{d^{4}}$ for $i=1,\ldots,d^{2}-1$.
Substituting Eq. (\ref{eq5.5}) into Eq. (\ref{eq5.4}), we have
\begin{equation}\label{eq5.6}
p(t)=\mbox{tr}\left[\sum_{i=0}^{d^{2}-1} \mu _{i}(t) M_{i}\rho_0 M_{i}^{\dag}|\phi\rangle\langle \phi|\right]
= \sum_{i=0}^{d^{2}-1} \mu _{i}(t) \mbox{tr}[\rho_0 M_{i}^{\dag}|\phi\rangle\langle \phi|M_{i}].
\end{equation}

Now, for $d^2$ different time instants $\{t_1,\cdots,t_{d^2}\}$, it follows from Eq. (\ref{eq5.6}) that
\begin{equation}\label{eq5.7}
\left[
\begin{matrix}
	p(t_1)\\
	p(t_2)\\
	\vdots\\
	p(t_{d^2})
\end{matrix}
\right]= \left[
\begin{matrix}
	\mu_{0}(t_1) &\mu_{1}(t_1) &\cdots &\mu_{d^{2}-1}(t_1) \\
	\mu_{0}(t_2) &\mu_{1}(t_2)  &\cdots &\mu_{d^{2}-1}(t_2)\\
	\vdots        & 	\vdots   &\ddots &\vdots\\
	\mu_{0}(t_{d^2}) &\mu_{1}(t_{d^2})  &\cdots &\mu_{d^{2}-1}(t_{d^2})
\end{matrix}
\right] \left[
\begin{matrix}
	\mbox{tr}[\rho_0 M_{0}^{\dag}|\phi\rangle\langle \phi|M_{0}]\\
	\mbox{tr}[\rho_0 M_{1}^{\dag}|\phi\rangle\langle \phi|M_{1}]\\
	\vdots\\
	\mbox{tr}[\rho_0 M_{d^{2}-1}^{\dag}|\phi\rangle\langle \phi|M_{d^{2}-1}]
\end{matrix}
\right].
\end{equation}

We use $\mathcal{U}$ to denote the square matrix in Eq. (\ref{eq5.7}). Then it is not difficult to verify that $\mathcal{U}$ is invertible.
Therefore, we can uniquely obtain the probability distribution $\big\{\mbox{tr}[\rho_0 M_{\alpha}^{\dag}|\phi\rangle\langle \phi|M_{\alpha}]:\alpha=0,\cdots,d^{2}-1\big\}$.
This reveals that for any unknown quantum state $\rho_{0}$ which is evolved with the time-dependent average channel $\Psi_t(\rho)$ defined in Eq. (\ref{eq5.5}), only one projective measurement $\{|\phi\rangle\langle\phi|,I-|\phi\rangle\langle\phi|\}$ in $d^2$ different time instants could be enough to reconstruct $\rho_{0}$, instead of the common $d^2$ different measurements.

Denote $|\phi_{\alpha}\rangle:=M_{\alpha}|\phi\rangle$, where $\alpha=0,\cdots,d^2-1$.
Observing Eq. (\ref{eq5.7}), we know that the initial state $\rho_0$ is uniquely determined as along as $\{|\phi_{\alpha}\rangle \langle \phi_{\alpha}| :\alpha=0,\cdots,d^2-1\}$ could span the whole $d$-dimensional density matrix space, i.e., informationally complete (IC).

Based on Eq. (\ref{eq5.7}), we will give a sufficient condition under which a collection of projective operators is IC
and obtain the corresponding IC-POVM in the following theorem.

\begin{theorem}\label{theorem5.1}
Define a real symmetric square matrix $\mathcal{A}:=\left[|\langle \phi_j|\phi_k\rangle|^2\right]_{0\leq j,k\leq d^{2}-1}$.
Let $K=\sum_{\alpha=0}^{d^{2}-1}|\phi_{\alpha}\rangle \langle \phi_{\alpha}|$.
Suppose that $|\mathcal{A}| \ne 0$. Then

(1) $\{|\phi_{\alpha}\rangle \langle \phi_{\alpha}| :\alpha=0,\cdots,d^2-1\}$ is informationally complete (IC);

(2) $\{K^{-\frac{1}{2}}|\phi_{\alpha}\rangle \langle \phi_{\alpha}|K^{-\frac{1}{2}} :\alpha=0,\cdots,d^2-1\}$ is an IC-POVM.
\end{theorem}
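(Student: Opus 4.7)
The plan is to prove both parts essentially as direct consequences of the hypothesis $|\mathcal{A}|\neq 0$, together with Lemmas \ref{lemma4.2} and \ref{lemma4.3}. For part (1), since $\dim M_{d}(\mathbb{C})=d^{2}$ and we have exactly $d^{2}$ candidate operators $|\phi_{\alpha}\rangle\langle\phi_{\alpha}|$, it suffices to prove linear independence. I would suppose $\sum_{\alpha=0}^{d^{2}-1}c_{\alpha}|\phi_{\alpha}\rangle\langle\phi_{\alpha}|=0$ for some scalars $c_{\alpha}$, then take the Hilbert--Schmidt inner product with each $|\phi_{j}\rangle\langle\phi_{j}|$; using $\mathrm{tr}\bigl(|\phi_{j}\rangle\langle\phi_{j}|\cdot|\phi_{\alpha}\rangle\langle\phi_{\alpha}|\bigr)=|\langle\phi_{j}|\phi_{\alpha}\rangle|^{2}$, this collapses the identity into the linear system $\mathcal{A}\,c=\mathbf{0}$ where $c=(c_{0},\ldots,c_{d^{2}-1})^{T}$. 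The hypothesis $|\mathcal{A}|\neq 0$ forces $c=\mathbf{0}$, so the projectors are linearly independent, hence $\mathrm{span}\{|\phi_{\alpha}\rangle\langle\phi_{\alpha}|\}_{\alpha=0}^{d^{2}-1}=M_{d}(\mathbb{C})$, which is the informational completeness of part (1).

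For part (2), the key observation is that every $M_{\alpha}=X^{j}Z^{k}$ is unitary, so $|\phi_{\alpha}\rangle=M_{\alpha}|\phi\rangle$ is a normalized vector whenever $|\phi\rangle$ is (which it must be for $|\phi\rangle\langle\phi|$ to be a rank-one projector). Thus the family $\{|\phi_{\alpha}\rangle\langle\phi_{\alpha}|\}_{\alpha=0}^{d^{2}-1}$ fits the hypothesis of Lemmas \ref{lemma4.2} and \ref{lemma4.3} with $p_{i}=1$ and $|a_{i}\rangle=|\phi_{\alpha}\rangle$, since part (1) already established the spanning condition. Lemma \ref{lemma4.2} therefore guarantees that $K=\sum_{\alpha}|\phi_{\alpha}\rangle\langle\phi_{\alpha}|$ is positive definite, so the unique positive definite square root $K^{-1/2}$ exists and the operators $Q_{\alpha}:=K^{-1/2}|\phi_{\alpha}\rangle\langle\phi_{\alpha}|K^{-1/2}$ are well defined and positive. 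The POVM condition follows immediately from $\sum_{\alpha}Q_{\alpha}=K^{-1/2}\bigl(\sum_{\alpha}|\phi_{\alpha}\rangle\langle\phi_{\alpha}|\bigr)K^{-1/2}=K^{-1/2}KK^{-1/2}=I_{d}$, while Lemma \ref{lemma4.3} upgrades the span statement from $\{|\phi_{\alpha}\rangle\langle\phi_{\alpha}|\}$ to $\{Q_{\alpha}\}$, yielding $\mathrm{span}\{Q_{\alpha}\}=M_{d}(\mathbb{C})$. Combined with Proposition \ref{proposition1.2}, this gives that $\{Q_{\alpha}\}_{\alpha=0}^{d^{2}-1}$ is an IC-POVM.

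There is no real technical obstacle here; the proof is structured as a one-step reduction to the earlier lemmas. The only subtlety worth stating carefully is the equivalence between the span condition and linear independence in part (1), which is what makes the numerical criterion $|\mathcal{A}|\neq 0$ (a Gram-determinant condition for the Hilbert--Schmidt inner product on the projectors) exactly the right hypothesis. Once part (1) is in hand, part (2) is merely bookkeeping via Lemmas \ref{lemma4.2} and \ref{lemma4.3}.
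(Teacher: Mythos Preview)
Your proposal is correct and follows essentially the same approach as the paper: for part~(1) the paper also assumes a vanishing linear combination and reduces it to $\mathcal{A}x=0$ (by sandwiching with $\langle\phi_{\beta}|\,\cdot\,|\phi_{\beta}\rangle$, which is the same computation as your Hilbert--Schmidt pairing), and for part~(2) the paper likewise argues that $K$ is positive definite and checks $\sum_{\alpha}K^{-1/2}|\phi_{\alpha}\rangle\langle\phi_{\alpha}|K^{-1/2}=I_{d}$. If anything, your version is slightly more explicit, since you invoke Lemmas~\ref{lemma4.2}, \ref{lemma4.3} and Proposition~\ref{proposition1.2} to spell out both the positive-definiteness of $K$ and the spanning (hence IC) property of the rescaled operators, whereas the paper leaves the latter implicit.
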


\begin{proof} (1) Suppose there are some complex numbers $x_{0},\ldots,x_{d^{2}-1}$ such that
$\sum_{\alpha=0}^{d^2-1} x_{\alpha} |\phi_{\alpha}\rangle\langle \phi_\alpha|=0$.
Then for any $|\phi_{\beta}\rangle$, we have $\langle \phi_\beta|\left(\sum_{\alpha=0}^{d^2-1} x_{\alpha} |\phi_{\alpha}\rangle\langle \phi_\alpha|\right)|\phi_\beta\rangle=0$, i.e.,
$\sum_{\alpha=0}^{d^2-1} x_{\alpha} |\langle \phi_{\beta}|\phi_{\alpha}\rangle|^2 =0$. That is,
\begin{equation*}
\mathcal{A}|x\rangle=\left[
\begin{matrix}
	|\langle \phi_0|\phi_0\rangle|^2   & |\langle \phi_0|\phi_1\rangle|^2   &\cdots    &|\langle \phi_0|\phi_{d^2-1}\rangle|^2 \\
	|\langle \phi_1|\phi_0\rangle|^2   & |\langle \phi_1|\phi_1\rangle|^2   &\cdots    &|\langle \phi_1|\phi_{d^2-1}\rangle|^2\\
	\vdots           &  \vdots         &\ddots&\vdots\\
	|\langle \phi_{d^2-1}|\phi_0\rangle|^2   & |\langle \phi_{d^2-1}|\phi_1\rangle|^2   &\cdots    &|\langle \phi_{d^2-1}|\phi_{d^2-1}\rangle|^2
\end{matrix}
\right]
\left[
\begin{matrix}
	x_0\\
	x_1\\
	\vdots\\
	x_{d^2-1}
\end{matrix}
\right]=0.
\end{equation*}
Since $|\mathcal{A}| \ne 0$, we have $x_{0}=\cdots=x_{d^{2}-1}=0$.
Hence $\{|\phi_{\alpha}\rangle \langle \phi_{\alpha}| :\alpha=0,\cdots,d^2-1\}$ is linear independent and it thus spans the whole density matrix space.
Therefore, $\{|\phi_{\alpha}\rangle \langle \phi_{\alpha}| :\alpha=0,\cdots,d^2-1\}$ is IC.

(2) By part (1), it is not difficult to verify that $K$ is positive definite and so is $K^{-1}$, which implies that we can obtain the uniquely determined positive definite matrix $K^{-\frac{1}{2}}$, and thus
\begin{equation*}
\sum_{\alpha=0}^{d^{2}-1}K^{-\frac{1}{2}}|\phi_{\alpha}\rangle \langle \phi_{\alpha}|K^{-\frac{1}{2}}=K^{-\frac{1}{2}}\Big(\sum_{\alpha=0}^{d^{2}-1}|\phi_{\alpha}\rangle \langle \phi_{\alpha}|\Big)K^{-\frac{1}{2}}=I_{d}.
\end{equation*}
Therefore, $\{K^{-\frac{1}{2}}|\phi_{\alpha}\rangle \langle \phi_{\alpha}|K^{-\frac{1}{2}} :\alpha=0,\cdots,d^2-1\}$ is an IC-POVM. $\hfill\square$
\end{proof}

\begin{corollary}\label{corollary5.2}
\rm{(\!\!\cite[pp.2175-2176]{Renes2004Symmetric})}
If there is a state $|\phi\rangle$ such that
\begin{equation}\label{eq5.8}
	|\langle \phi | M_{\alpha}|\phi\rangle|^2= \frac{1}{d+1}, ~\alpha=1,\cdots,d^2-1,
\end{equation}
then $\{\frac{1}{d}|\phi_{\alpha}\rangle \langle \phi_{\alpha}| :\alpha=0,\cdots,d^2-1\}$ with $|\phi_{\alpha}\rangle=M_{\alpha}|\phi\rangle$
forms a SIC-POVM in dimension $d$.
Here, the state $|\phi\rangle$ satisfying Eq. (\ref{eq5.8}) is called a \textbf{fiducial state}.
\end{corollary}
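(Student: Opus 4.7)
The plan is to verify the three defining conditions of a SIC-POVM from Definition \ref{definition1.3} for the family $\{\frac{1}{d}|\phi_\alpha\rangle\langle\phi_\alpha|\}_{\alpha=0}^{d^2-1}$ with $|\phi_\alpha\rangle = M_\alpha|\phi\rangle$, assuming the fiducial state $|\phi\rangle$ is normalized. First I would observe that each $M_\alpha = X^j Z^k$ is unitary, so every $|\phi_\alpha\rangle$ is a unit vector, which takes care of the form $E_\alpha = \frac{1}{d}|u_\alpha\rangle\langle u_\alpha|$ with $|u_\alpha\rangle = |\phi_\alpha\rangle$ normalized.

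Next I would verify the equiangular condition $|\langle\phi_i|\phi_j\rangle|^2 = \frac{1}{d+1}$ for $i\neq j$. The key observation is that the Weyl--Heisenberg operators form a projective group: using $X^j Z^k = \omega^{-jk} Z^k X^j$ and the composition rules for $X^r$ and $Z^r$ derived earlier in the paper, a direct computation shows that for any indices $\alpha = jd+k$ and $\beta = j'd+k'$ there exist a scalar $c_{\alpha\beta}$ with $|c_{\alpha\beta}|=1$ and an index $\ell(\alpha,\beta)$ such that
\begin{equation*}
M_\alpha^\dag M_\beta = c_{\alpha\beta}\, M_{\ell(\alpha,\beta)},
\end{equation*}
with $\ell(\alpha,\beta) = 0$ if and only if $\alpha=\beta$. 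Consequently
\begin{equation*}
|\langle\phi_i|\phi_j\rangle|^2 = |\langle\phi|M_i^\dag M_j|\phi\rangle|^2 = |\langle\phi|M_{\ell(i,j)}|\phi\rangle|^2,
\end{equation*}
and since $\ell(i,j)\neq 0$ whenever $i\neq j$, the fiducial hypothesis (\ref{eq5.8}) immediately yields the value $\frac{1}{d+1}$.

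Finally I would check the completeness relation $\sum_\alpha \frac{1}{d}|\phi_\alpha\rangle\langle\phi_\alpha| = I_d$. This is where I can recycle work already done in Section \ref{subsection5.1}: we established there that $\sum_{m,n=0}^{d-1} M_{mn} \rho M_{mn}^\dag = d\, I_d$ for every density matrix $\rho$. Applying this with $\rho = |\phi\rangle\langle\phi|$ gives
\begin{equation*}
\sum_{\alpha=0}^{d^2-1} \frac{1}{d}|\phi_\alpha\rangle\langle\phi_\alpha| = \frac{1}{d}\sum_{\alpha=0}^{d^2-1} M_\alpha |\phi\rangle\langle\phi| M_\alpha^\dag = \frac{1}{d}\cdot d I_d = I_d,
\end{equation*}
so the family is a POVM. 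Combined with the previous two items, all conditions of Definition \ref{definition1.3} are satisfied, and $|\langle\phi_i|\phi_j\rangle|^2 = d^2\,\mathrm{tr}(E_i E_j) = \frac{1}{d+1}$ follows from a direct expansion of $d^2\,\mathrm{tr}(E_iE_j)$ using $E_i = \frac{1}{d}|\phi_i\rangle\langle\phi_i|$.

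The only mildly delicate step is the projective-group identity $M_\alpha^\dag M_\beta = c_{\alpha\beta} M_{\ell(\alpha,\beta)}$ with $\ell(\alpha,\beta)\neq 0$ for $\alpha\neq\beta$; this is a bookkeeping calculation in the Weyl--Heisenberg commutation relations, and I would present it as a short self-contained lemma before the main verification so that the rest of the argument collapses to the two one-line checks above.
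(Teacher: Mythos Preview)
Your proof is correct and in fact more explicit than the paper's on the one step that carries content. The paper writes down the Gram matrix $\mathcal{A} = [|\langle\phi_j|\phi_k\rangle|^2]$ directly from Eq.~(\ref{eq5.8})---implicitly using exactly the projective-group identity $M_\alpha^\dag M_\beta = c_{\alpha\beta}M_{\ell(\alpha,\beta)}$ that you spell out as a lemma---then invokes Theorem~\ref{theorem5.1} to deduce informational completeness from $|\mathcal{A}|\neq 0$, checks $\sum_\alpha \frac{1}{d}|\phi_\alpha\rangle\langle\phi_\alpha| = I_d$ ``by direct calculation,'' and concludes by noting the symmetry already visible in $\mathcal{A}$. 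Your route bypasses Theorem~\ref{theorem5.1} entirely and simply verifies Definition~\ref{definition1.3} item by item, which is cleaner since informational completeness is not a separate hypothesis in that definition. The paper's detour through Theorem~\ref{theorem5.1} is there to exhibit the corollary as an application of the section's main result; your argument is self-contained. Both rest on the same two ingredients: the Weyl--Heisenberg commutation relations for the equiangular condition, and the twirling identity $\sum_{m,n} M_{mn}\rho M_{mn}^\dag = dI_d$ from Section~\ref{subsection5.1} for completeness.
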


\begin{proof}
By Eq. (\ref{eq5.8}), the matrix $\mathcal{A}:=\left[|\langle \phi_j|\phi_k\rangle|^2\right]_{0\leq j,k\leq d^{2}-1}$ with $|\phi_{\alpha}\rangle=M_{\alpha}|\phi\rangle$ is given by
\begin{equation*}
\mathcal{A}=\frac{1}{d+1}\left[
\begin{matrix}
	d+1   & 1   &\cdots    &1 \\
	1   & d+1   &\cdots    &1\\
	\vdots           &\vdots           &\ddots&\vdots\\
	1   & 1   &\cdots    &d+1
\end{matrix}
\right].
\end{equation*}
It is not difficult to verify that $|\mathcal{A}|> 0$. By Theorem \ref{theorem5.1}, we deduce that
$\{\frac{1}{d}|\phi_{\alpha}\rangle \langle \phi_{\alpha}| :\alpha=0,\cdots,d^2-1\}$ is IC.
Besides, by direct calculation, we have $\sum_{\alpha=0}^{d^{2}-1}\frac{1}{d}|\phi_{\alpha}\rangle \langle \phi_{\alpha}|=I_{d}$,
which together with the fact that $\{\frac{1}{d}|\phi_{\alpha}\rangle \langle \phi_{\alpha}| :\alpha=0,\cdots,d^2-1\}$ is symmetric reveals that
$\{\frac{1}{d}|\phi_{\alpha}\rangle \langle \phi_{\alpha}| :\alpha=0,\cdots,d^2-1\}$ is a SIC-POVM. $\hfill\square$
\end{proof}

\begin{remark}
Corollary \ref{corollary5.2} reveals that the problem of constructing a SIC-POVM can be transformed into that of finding a fiducial state $|\phi\rangle$
satisfying Eq. (\ref{eq5.8}).
Such SIC-POVMs obtained from the fiducial states are called the \textbf{Weyl-Heisenberg covariant SIC-POVMs}.
It was conjectured in \cite{Renes2004Symmetric} that in every finite dimension, a Weyl-Heisenberg covariant SIC-POVM can be constructed
as the orbit of a suitable fiducial state $|\phi\rangle$ satisfying Eq. (\ref{eq5.8}).
In fact, such fiducial states do exist for some dimensions $d$.
For example,
when $d=2$, there exists a fiducial state $|\phi\rangle=\frac{1}{\sqrt{6}}\left[\sqrt{3+\sqrt{3}},e^{i\pi/4}\sqrt{3-\sqrt{3}}\right]^{T}$ satisfying Eq. (\ref{eq5.8});
when $d=3$, there exists a fiducial state $|\phi\rangle=\frac{1}{\sqrt{2}}[0,1,-1]^{T}$ satisfying Eq. (\ref{eq5.8}).
Besides, such Weyl-Heisenberg covariant SIC-POVMs were found with high numerical precision in all dimensions $d\leq 45$.
In \cite{Zauner1999Quantum}, Zauner put forward a stronger conjecture.
He conjectured that in all finite dimensions there exists a fiducial state for a Weyl-Heisenberg covariant SIC-POVM that is an eigenvector of the matrix $Z$,
where $Z$ satisfies $\langle j|Z|k\rangle:=\frac{e^{i\xi}}{\sqrt{d}}\tau^{2jk+j^{2}}$ in which $\xi\in \mathbb{R}$ and
$\tau=e^{\frac{\pi i(d+1)}{d}}$.
Recently, Horodecki, Rudnicki and Zyczkowski \cite{Horodecki2022Five} listed the problem whether the SIC-POVMs can be constructed in an infinite sequence
of dimensions as the first open problem in quantum information theory.
This problem is also related to the maximal set of equiangular line and Hilbert's 12th problem.
\end{remark}

In the following theorem, we show that under certain condition, it is possible to obtain infinite families of projective operators that are IC,
and obtain infinite families of corresponding IC-POVMs.

\begin{theorem}\label{theorem5.4}
Define $\mathcal{A}_{|\phi\rangle}:=\left[|\langle \phi_j|\phi_k\rangle|^2\right]_{0\leq j,k\leq d^{2}-1}$,
where $|\phi_{\alpha}\rangle=M_{\alpha}|\phi\rangle$ for $\alpha=0,\cdots,d^{2}-1$.
If $|\mathcal{A}_{|\phi\rangle}|\ne 0$, then there exist infinite states $|\widehat{\phi}\rangle$ such that

(1) $|\mathcal{A}_{|\widehat{\phi}\rangle}|\ne 0$;

(2) $\{|\widehat{\phi_{\alpha}}\rangle \langle \widehat{\phi_{\alpha}}| :\alpha=0,\cdots,d^2-1\}$ is IC, where $|\widehat{\phi_{\alpha}}\rangle=M_{\alpha}|\widehat{\phi}\rangle$;

(3) $\{K^{-\frac{1}{2}}|\widehat{\phi_{\alpha}}\rangle \langle \widehat{\phi_{\alpha}}|K^{-\frac{1}{2}} :\alpha=0,\cdots,d^2-1\}$ is an IC-POVM,
where $K=\sum_{\alpha=0}^{d^{2}-1}|\widehat{\phi_{\alpha}}\rangle \langle \widehat{\phi_{\alpha}}|$.
\end{theorem}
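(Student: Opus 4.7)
The plan is to exploit continuity (in fact real-analyticity) of the map $|\widehat{\phi}\rangle\mapsto |\mathcal{A}_{|\widehat{\phi}\rangle}|$ on the unit sphere of $\mathbb{C}^d$: since this determinant is nonzero at the given $|\phi\rangle$, it stays nonzero on an open neighborhood, which contains infinitely many normalized states. Once (1) is secured for all states in such a neighborhood, conclusions (2) and (3) are immediate applications of Theorem \ref{theorem5.1}, so the entire proof reduces to promoting the pointwise hypothesis to an open condition.

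First, I would observe that every entry of $\mathcal{A}_{|\widehat{\phi}\rangle}$ has the form
$|\langle\widehat{\phi_j}|\widehat{\phi_k}\rangle|^2 = \langle\widehat{\phi}|M_j^{\dag}M_k|\widehat{\phi}\rangle\,\overline{\langle\widehat{\phi}|M_j^{\dag}M_k|\widehat{\phi}\rangle}$,
which is a polynomial in the real and imaginary parts of the coordinates of $|\widehat{\phi}\rangle\in\mathbb{C}^d$, and is therefore real-analytic. Consequently $|\widehat{\phi}\rangle\mapsto |\mathcal{A}_{|\widehat{\phi}\rangle}|$ is a real-analytic (in particular continuous) function on $\mathbb{C}^d$, and restricts to such a function on the unit sphere.

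Next, to produce an explicit infinite family, I would fix any $|\psi\rangle\in\mathbb{C}^d$ not proportional to $|\phi\rangle$ and consider the one-parameter deformation
$|\widehat{\phi}(\epsilon)\rangle := \frac{|\phi\rangle+\epsilon|\psi\rangle}{\||\phi\rangle+\epsilon|\psi\rangle\|}$
for real $\epsilon$ small enough that the denominator is positive. Distinct values of $\epsilon$ yield pairwise distinct normalized states. The composition $\epsilon\mapsto |\mathcal{A}_{|\widehat{\phi}(\epsilon)\rangle}|$ is continuous and equals $|\mathcal{A}_{|\phi\rangle}|\ne 0$ at $\epsilon=0$, so there exists $\delta>0$ such that $|\mathcal{A}_{|\widehat{\phi}(\epsilon)\rangle}|\ne 0$ for all $\epsilon\in(-\delta,\delta)$. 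This yields uncountably many admissible $|\widehat{\phi}\rangle$ satisfying (1).

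Finally, for each $|\widehat{\phi}\rangle$ in this family, statements (2) and (3) follow by invoking Theorem \ref{theorem5.1} with $|\widehat{\phi}\rangle$ in place of $|\phi\rangle$. There is no genuine obstacle in this argument: the result is essentially an open-condition remark on top of Theorem \ref{theorem5.1}. The only details requiring care are the continuous dependence of the entries $|\langle\widehat{\phi_j}|\widehat{\phi_k}\rangle|^2$ on $|\widehat{\phi}\rangle$ and the fact that the normalization in the deformation keeps us on the unit sphere, both of which are routine.
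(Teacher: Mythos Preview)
Your proposal is correct and takes essentially the same approach as the paper: both observe that the entries $|\langle\widehat{\phi_j}|\widehat{\phi_k}\rangle|^2$ are polynomials in the real and imaginary parts of the coordinates of $|\widehat{\phi}\rangle$, so the determinant is a nontrivial polynomial (hence continuous) function that is nonzero on an open set once it is nonzero at $|\phi\rangle$, and then invoke Theorem~\ref{theorem5.1} for (2) and (3). The only cosmetic difference is that the paper works in the ambient $\mathbb{C}^d$ and normalizes at the end, whereas you work directly on the unit sphere via an explicit one-parameter deformation.
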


\begin{proof}
(1) Define a vector $|\widetilde{\phi}\rangle=\sum_{k=0}^{d-1} (a_k+i b_k)|k\rangle$ with real variables $\{a_k,b_k:k=0,\cdots,d-1\}$.
Set $M_j=M_{j_1j_{2}}=X^{j_1}Z^{j_2}$ and $M_k=M_{k_1k_2}=X^{k_1}Z^{k_2}$.
Then the $(j,k)$-th element of $\mathcal{A}_{|\widetilde{\phi}\rangle}$ is
\begin{align*}
\left|\langle \widetilde{\phi_{j}}|\widetilde{\phi_{k}}\rangle\right|^2
		            &=\left|\langle \widetilde{\phi}| Z^{-j_2}X^{-j_1}X^{k_1}Z^{k_2}|\widetilde{\phi} \rangle\right|^2\\
		            &=\left|\langle \widetilde{\phi}| X^{k_1-j_1}Z^{k_2-j_2}|\widetilde{\phi} \rangle\right|^2\\
		            &=\Bigg|\sum_{k,j=0}^{d-1} (a_k-i b_k)(a_j+i b_j)\omega^{j(k_2-j_2)}\langle k|j+k_1-j_1\rangle\Bigg|^2\\
		            &=\Bigg|\sum_{k,j=0}^{d-1} \left\{\cos [j(k_2-j_2)](a_ka_j+b_kb_j)-\sin     [j(k_2-j_2)](a_kb_j- b_ka_j)\right\} \delta_{k,j+k_1-j_1}\Bigg|^2\\
		            &+\Bigg|\sum_{k,j=0}^{d-1}\left\{\sin [j(k_2-j_2)](a_ka_j+b_kb_j)+\cos     [j(k_2-j_2)](a_kb_j- b_ka_j)\}\delta_{k,j+k_1-j_1}\right\}\Bigg|^2,
\end{align*}
which reveals that $\left|\langle \widetilde{\phi_{j}}|\widetilde{\phi_{k}}\rangle\right|^2$ is a polynomial with real variables $\{a_k,b_k:k=0,\cdots,d-1\}$,
and so is $|\mathcal{A}_{|\widetilde{\phi}\rangle}|$.
Denote $|\mathcal{A}_{|\widetilde{\phi}\rangle}|:=f(a_0,\cdots,a_{d-1},b_0,\cdots,b_{d-1})$.
Since $|\mathcal{A}_{|\phi\rangle}|\ne 0$, we have that
$$|\mathcal{A}_{|\widetilde{\phi}\rangle}|=f(a_0,\cdots,a_{d-1},b_0,\cdots,b_{d-1})\not\equiv0,$$
which implies that there are infinite vectors $|\widetilde{\phi}\rangle$ such that $|\mathcal{A}_{|\widetilde{\phi}\rangle}|\ne 0$.
Making a normalization to these vectors $|\widetilde{\phi}\rangle$, we are able to obtain infinite corresponding states $|\widehat{\phi}\rangle$
such that $|\mathcal{A}_{|\widehat{\phi}\rangle}|\ne 0$.

(2) It is directly obtained from part (1) and Theorem \ref{theorem5.1} (1).

(3) Similar to the proof of Theorem \ref{theorem5.1} (2). $\hfill\square$
\end{proof}

\vspace{6pt}

The following corollary derived from Theorem \ref{theorem5.4} reveals that we can acquire infinite families of projective operators that are IC
and obtain infinite families of corresponding IC-POVMs, provided that there exists a fiducial state $|\phi\rangle$ satisfying Eq. (\ref{eq5.8}).

\begin{corollary}\label{corollary5.5}
If there exists a fiducial state $|\phi\rangle$ satisfying Eq. (\ref{eq5.8}), then there exist infinite states $|\widehat{\phi}\rangle$ such that

(1) $|\mathcal{A}_{|\widehat{\phi}\rangle}|\ne 0$;

(2) $\{|\widehat{\phi_{\alpha}}\rangle \langle \widehat{\phi_{\alpha}}| :\alpha=0,\cdots,d^2-1\}$ is IC, where $|\widehat{\phi_{\alpha}}\rangle=M_{\alpha}|\widehat{\phi}\rangle$;

(3) $\{K^{-\frac{1}{2}}|\widehat{\phi_{\alpha}}\rangle \langle \widehat{\phi_{\alpha}}|K^{-\frac{1}{2}} :\alpha=0,\cdots,d^2-1\}$ is an IC-POVM,
where $K=\sum_{\alpha=0}^{d^{2}-1}|\widehat{\phi_{\alpha}}\rangle \langle \widehat{\phi_{\alpha}}|$.
\end{corollary}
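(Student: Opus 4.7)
The plan is straightforward: Corollary \ref{corollary5.5} follows immediately by feeding a fiducial state $|\phi\rangle$ into Theorem \ref{theorem5.4} as its starting hypothesis. The only point that requires verification is that such a $|\phi\rangle$ already satisfies the premise $|\mathcal{A}_{|\phi\rangle}|\ne 0$ of Theorem \ref{theorem5.4}.

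First, I would reuse the computation already recorded in the proof of Corollary \ref{corollary5.2}. There, for a fiducial state satisfying Eq. (\ref{eq5.8}), the matrix $\mathcal{A}_{|\phi\rangle}$ has diagonal entries equal to $1$ and all off-diagonal entries equal to $\frac{1}{d+1}$, so it decomposes as $\mathcal{A}_{|\phi\rangle}=\frac{1}{d+1}\bigl(d\,I_{d^{2}}+J\bigr)$, where $J$ denotes the $d^{2}\times d^{2}$ all-ones matrix. A short eigenvalue computation (the eigenvalues of $J$ are $d^{2}$ with multiplicity $1$ and $0$ with multiplicity $d^{2}-1$) shows that $\mathcal{A}_{|\phi\rangle}$ is positive definite; in particular $|\mathcal{A}_{|\phi\rangle}|\ne 0$.

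Second, I would apply Theorem \ref{theorem5.4} to this $|\phi\rangle$. The theorem then produces infinitely many normalized states $|\widehat{\phi}\rangle$ for which its conclusions (1)--(3) hold verbatim, and these are precisely the three assertions (1)--(3) of Corollary \ref{corollary5.5}.

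Thus there is essentially no new obstacle to overcome: the entire content of the corollary is already packaged inside Theorem \ref{theorem5.4}, which in turn rests on the observation that $|\mathcal{A}_{|\widetilde{\phi}\rangle}|$ is a real polynomial in the real and imaginary parts of the components of $|\widetilde{\phi}\rangle$, so any single nonvanishing point (here supplied by the fiducial state $|\phi\rangle$) forces the polynomial to be nonzero on a Zariski-open dense set. The only mildly delicate point worth flagging is that after normalization one still obtains infinitely many valid choices, which is automatic because a nonzero real polynomial cannot vanish identically on the unit sphere in $\mathbb{C}^{d}\cong\mathbb{R}^{2d}$; the normalization map from the nonvanishing locus therefore sweeps out an (uncountable) dense subset of the sphere.
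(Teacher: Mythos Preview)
Your proposal is correct and follows essentially the same route as the paper: show that a fiducial state $|\phi\rangle$ satisfies the hypothesis $|\mathcal{A}_{|\phi\rangle}|\ne 0$ by appealing to the explicit form of $\mathcal{A}_{|\phi\rangle}$ computed in Corollary~\ref{corollary5.2}, and then invoke Theorem~\ref{theorem5.4}. The only difference is cosmetic: the paper simply cites ``the proof of Corollary~\ref{corollary5.2}'' for $|\mathcal{A}_{|\phi\rangle}|>0$, whereas you spell out the eigenvalue argument for $\frac{1}{d+1}(dI_{d^{2}}+J)$ and add a remark about normalization---both are harmless elaborations of the same idea.
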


\begin{proof}
By the proof of Corollary \ref{corollary5.2}, we know
$|\mathcal{A}_{|\phi\rangle}|>0$ for $\mathcal{A}_{|\phi\rangle}:=\left[|\langle \phi_j|\phi_k\rangle|^2\right]_{0\leq j,k\leq d^{2}-1}$ with $|\phi_{\alpha}\rangle=M_{\alpha}|\phi\rangle$. Then the result follows from Theorem \ref{theorem5.4}. $\hfill\square$
\end{proof}

\begin{remark}
On Corollary \ref{corollary5.5}, we make the following remarks:

1) Corollary \ref{corollary5.5} implies that if there is a fiducial state $|\phi\rangle$ satisfying Eq. (\ref{eq5.8}),
then we will have more space to design the unitary operation to prepare the state $|\widehat{\phi}\rangle$ from $|0\rangle$.

2) If $|\mathcal{A}_{|\widetilde{\phi}\rangle}|=f(a_0,\cdots,a_{d-1},b_0,\cdots,b_{d-1})\equiv0$, then
we deduce from Corollary \ref{corollary5.5} and the proof of Theorem \ref{theorem5.4} that there is no fiducial state $|\phi\rangle$ satisfying Eq. (\ref{eq5.8}).
This implies that the Zauner's conjecture is incorrect in this case.
\end{remark}

\subsection{An application of the time-dependent average channel}\label{subsection5.3}

Given a quantum state $\rho_0$, can we simulate a SIC-POVM $\{E_k:k=1,\cdots,d^2\}$ on it by using the time-dependent average channel shown in Eq. (\ref{eq5.5})?
The answer is yes. We put this statement in the following claim.

\begin{claim}\label{claim5.7}
We can simulate a SIC-POVM on any unknown quantum state by using the time-dependent average channel in Eq. (\ref{eq5.5}).
\end{claim}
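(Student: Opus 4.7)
The plan is to combine the inversion procedure of Eq.\ (\ref{eq5.7}) with the Weyl--Heisenberg covariance of a SIC-POVM. First, I would fix a fiducial state $|\phi\rangle$ satisfying Eq.\ (\ref{eq5.8}), so that Corollary \ref{corollary5.2} already yields a SIC-POVM of the form
$$\left\{E_{\alpha}=\tfrac{1}{d}|\phi_{\alpha}\rangle\langle\phi_{\alpha}|:\alpha=0,\ldots,d^{2}-1\right\},\qquad |\phi_{\alpha}\rangle=M_{\alpha}|\phi\rangle,$$
in dimension $d$. ``Simulating'' this SIC-POVM on an unknown $\rho_{0}$ amounts to producing the probability vector $\{p_{\alpha}=\mathrm{tr}(E_{\alpha}\rho_{0})\}_{\alpha=0}^{d^{2}-1}$ from data collected via the time-dependent average channel $\Psi_{t}(\rho)$.

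Next, I would feed $\rho_{0}$ into $\Psi_{t}(\rho)$ and perform only the single two-outcome projective measurement $\{|\phi\rangle\langle\phi|,I-|\phi\rangle\langle\phi|\}$ at $d^{2}$ distinct time instants $t_{1},\ldots,t_{d^{2}}$, using the same fiducial state $|\phi\rangle$ fixed in the previous step. The probability distribution $\{\mu_{i}(t)\}_{i=0}^{d^{2}-1}$ entering $\Psi_{t}$ and the chosen time instants are taken such that the square matrix $\mathcal{U}$ in Eq.\ (\ref{eq5.7}) is invertible; this can be done with any generic linearly independent family of time-continuous distributions, for example distinct exponentials along the lines of Example \ref{example4.8}. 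Inverting $\mathcal{U}$ in Eq.\ (\ref{eq5.7}) then recovers the full vector of quantities $\big\{\mathrm{tr}[\rho_{0}M_{\alpha}^{\dag}|\phi\rangle\langle\phi|M_{\alpha}]\big\}_{\alpha=0}^{d^{2}-1}$.

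The concluding step is to identify these recovered numbers with the target SIC-POVM probabilities. Rewriting
$$\mathrm{tr}[\rho_{0}M_{\alpha}^{\dag}|\phi\rangle\langle\phi|M_{\alpha}]=\langle\phi|M_{\alpha}\rho_{0}M_{\alpha}^{\dag}|\phi\rangle$$
and using that the set $\{M_{\alpha}^{\dag}\}_{\alpha}$ coincides with $\{M_{\alpha}\}_{\alpha}$ up to global phases, since $(X^{j}Z^{k})^{\dag}=\omega^{-jk}X^{-j}Z^{-k}$ is again a Weyl--Heisenberg operator, the collection $\{M_{\alpha}^{\dag}|\phi\rangle\}$ is precisely the same set of rays as the SIC orbit $\{|\phi_{\alpha}\rangle\}$ up to a permutation of the $\alpha$'s. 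Hence, after this relabeling, each recovered value equals $d\cdot p_{\alpha}$, and dividing by $d$ produces the complete probability distribution of the SIC-POVM $\{E_{\alpha}\}$ on the unknown state $\rho_{0}$.

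I expect the principal obstacle to be the dependence on the existence of a fiducial state, which is precisely the content of Zauner's conjecture and is currently rigorously established only in the dimensions listed in the introduction. Wherever such a $|\phi\rangle$ is available (analytically or with high numerical precision), the scheme gives a genuine SIC-POVM simulation; in dimensions where it is not, Theorem \ref{theorem5.1} still produces an IC-POVM from the same procedure, but symmetry is lost. A secondary, routine technicality, already handled earlier in Section \ref{section3}, is arranging $\det\mathcal{U}\neq 0$, which is generic in the choice of $\{\mu_{i}(t)\}$ and of the time instants.
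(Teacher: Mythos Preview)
Your proposal is correct and follows essentially the same four-step route as the paper: evolve $\rho_{0}$ under $\Psi_{t}$, measure with $\{|\phi\rangle\langle\phi|,I-|\phi\rangle\langle\phi|\}$ at $d^{2}$ time instants, invert $\mathcal{U}$ in Eq.~(\ref{eq5.7}), and then divide by $d$ after identifying the recovered quantities with $d\,\mathrm{tr}(\rho_{0}E_{\alpha})$. The only cosmetic difference is that you explicitly handle the relabeling $M_{\alpha}^{\dag}\sim M_{\sigma(\alpha)}$ needed to match the entries of Eq.~(\ref{eq5.7}) (which involve $M_{\alpha}^{\dag}|\phi\rangle$) to the SIC orbit $\{M_{\alpha}|\phi\rangle\}$, whereas the paper passes over this point silently in its Step~2; your version is in fact cleaner on this detail.
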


\begin{proof}
For any unknown quantum state $\rho_0$, we can simulate a SIC-POVM $\{E_k:k=0,\cdots,d^{2}-1\}$ on it by the following steps:

{\bfseries Step 1:} We firstly evolve $\rho_0$ with the time-dependent average channel shown in Eq. (\ref{eq5.5}).

{\bfseries Step 2:} Then, we measure the evolved state with $\{|\phi\rangle\langle \phi|,I-|\phi\rangle\langle \phi|\}$ at $d^2$ different time instants.
By Eq. (\ref{eq5.7}), the probabilities $\{\mbox{tr}( \rho_0 M_k|\phi\rangle \langle \phi|M_k^{\dag}):k=0,\cdots,d^{2}-1\}$ can be calculated.

{\bfseries Step 3:} By \cite{Zauner1999Quantum,Fuchs2017The}, when $|\phi\rangle$ is the fiducial state satisfying Eq. (\ref{eq5.8}),
we have $M_k|\phi\rangle \langle \phi|M_k^{\dag}=|\phi_k\rangle\langle\phi_k|=d E_k$.

{\bfseries Step 4:} Dividing a factor $d$ by each of these probabilities calculated in Step 2, we obtain the probability distributions
$\{\mbox{tr}(\rho_{0} E_k):k=0,\cdots,d^{2}-1\}$.

Therefore, the probabilities of $\rho_{0}$ measured by a SIC-POVM $\{E_k:k=0,\cdots,d^{2}-1\}$ are simulated. $\hfill\square$
\end{proof}

\begin{table}[!htbp]
\renewcommand\arraystretch{1.8}
\centering	
\small
\setlength{\abovecaptionskip}{0.cm}
\setlength{\belowcaptionskip}{0.2cm}
\caption{A summary of some known methods to implement SIC-POVMs experimentally.}\label{table1}
\vspace{6pt}
\begin{tabular}{|c|c|c|}
\hline
No.&Method (Main tools)&Reference\\
\hline
(a)&Neumark's theorem&\cite{Rehacek2004Minimal,Tabia2012Experimental}\\
\hline
(b)&Quantum walk&\cite{Bian2015Realization,Zhao2015Experimental}\\
\hline
(c)&Unitary operations on the fiducial state without auxiliary space&\cite{Bent2015Experimental}\\
\hline
(d)&Projective measurements and post-processing procedure&\cite{Singal2022Implementation}\\
\hline
\end{tabular}

\end{table}

\begin{remark}
In addition to full state tomography, the SIC-POVMs can be used in some other areas of quantum information, such as self-testing \cite{Tavakoli2021Mutually}, quantum cryptography \cite{Matthews2009Distinguishability}, entanglement detection \cite{Shang2018Enhanced} and etc.
Some different methods to implement the SIC-POVMs experimentally are summarized in Table \ref{table1}. To be specific,
\begin{itemize}
\setlength{\itemsep}{1pt}
\setlength{\parsep}{1pt}
\setlength{\parskip}{1pt}
\item [(a)] The first method is to use Neumark's theorem. For the SIC-POVM of dimension 2, two \cite{Rehacek2004Minimal} or one \cite{Tabia2012Experimental}
auxiliary qubits have been introduced. After preparing  a global unitary operation between the main system and the auxiliary qubits, people  measure the compound system with standard projective measurements.

\item [(b)] The second method is to use quantum walk, where one dimensional line is the auxiliary space \cite{Bian2015Realization,Zhao2015Experimental}. People also have to prepare the global unitary operation, but the coupling operations between the main system and auxiliary space are different from the ones in qubit systems.
In some specific physical platform, the operations can be implemented easily.

\item [(c)] The third method is to use $d^2$ unitary operations on the fiducial state without auxiliary space. Tomography of qubits in dimension $d=2-10$ has been reported in \cite{Bent2015Experimental}.

\item [(d)] The fourth method is to simulate SIC-POVMs probabilistically \cite{Singal2022Implementation}.
People use the projective measurements and post-processing procedure to simulate the generalized POVMs.
The SIC-POVMs are implemented in more dimensions with the success probability about $\frac{1}{5}$.
\end{itemize}
\end{remark}

\section{Concluding remarks and further research}\label{section4}
The main contributions in this paper are summarized as follows.
\begin{itemize}
\setlength{\itemsep}{1.5pt}
\setlength{\parsep}{1.5pt}
\setlength{\parskip}{1.5pt}
\item [(1)] We established a dynamical quantum state tomography framework.
Under this framework, it is feasible to obtain complete knowledge of any unknown state of a $d$-level system via only an arbitrary operator of certain
types of IC-POVMs in dimension $d$ (see Theorem \ref{theorem4.4}).
We also provided a concrete example to illustrate this framework (see Example \ref{example4.8}).

\item [(2)] We studied the IC-POVMs under the time-dependent average channel
and show how to simulate a SIC-POVM on any unknown quantum state by using this channel.
\begin{itemize}
\item We showed that under the time-dependent average channel, we can acquire a collection of projective operators that is informationally complete (IC) and thus obtain the corresponding IC-POVM (see Theorem \ref{theorem5.1}).

\item We showed that under certain condition, it is possible to obtain infinite families of projective operators that are IC, and obtain infinite families of corresponding IC-POVMs (see Theorem \ref{theorem5.4} and Corollary \ref{corollary5.5});
    otherwise, the Zauner's conjecture is incorrect.

\item As an application, we showed that we can simulate a SIC-POVM on any unknown quantum state by using the time-dependent average channel (see Claim \ref{claim5.7}).
\end{itemize}
\end{itemize}

In future research, it would be interesting to explore whether there is a dimension $d$ satisfying that
$\mathcal{A}_{|\phi\rangle}=0$ with numerical experiments, where $\mathcal{A}_{|\phi\rangle}$ is defined as in Theorem \ref{theorem5.4}. If we find such a dimension, it would challenge Zauner's conjecture,
but this doesn't deny that SIC-POVMs exist for all dimensions. Instead of using the Weyl-Heisenberg group, there could also be other different ways to create SIC-POVMs.
Another interesting research topic is to find efficient ways to decompose quantum circuits $U$ to make $\mathcal{A}_{U|0\rangle}\ne 0$, or use variational quantum circuits to create the fiducial state $|\phi\rangle=U|0\rangle$. These methods could be practical for experiments.

Finding an auxiliary system and coupling it with the $d$-dimensional system could create the desired time-dependent channel by performing a partial trace on the ancilla. In real experiments, coefficients in time-dependent depolarizing channels might all be different. Once we figure out these coefficients, they can be directly applied in the dynamical tomography scheme.

\section*{Acknowledge}

Meng Cao and Yu Wang are supported by grants from Yanqi Lake Beijing Institute of Mathematical Sciences and Applications.
Yu Wang is also supported by National Natural Science Foundation of China (Grant No. 62001260).

\footnotesize{
\bibliographystyle{plain}
\bibliography{tomography}
}

\end{document}